\newcommand{\be}{\begin{equation}}
\newcommand{\ee}{\end{equation}}
\newcommand{\bea}{\begin{eqnarray}}
\newcommand{\eea}{\end{eqnarray}}
\def\squareforqed{\hbox{\rlap{$\sqcap$}$\sqcup$}}
\def\qed{\ifmmode\squareforqed\else{\unskip\nobreak\hfil
\penalty50\hskip1em\null\nobreak\hfil\squareforqed
\parfillskip=0pt\finalhyphendemerits=0\endgraf}\fi}
\def\endenv{\ifmmode\;\else{\unskip\nobreak\hfil
\penalty50\hskip1em\null\nobreak\hfil\;
\parfillskip=0pt\finalhyphendemerits=0\endgraf}\fi}
\newcommand{\I}{\mathbbm{1}}
\newcommand{\la}{\langle}
\newcommand{\ra}{\rangle}
\newcommand{\re}{\color{blue}}  
\newcommand{\blk}{\color{black}}
\newtheorem*{rep@theorem}{\rep@title}
\newcommand{\newreptheorem}[2]{%
\newenvironment{rep#1}[1]{%
 \def\rep@title{#2 \ref{##1}}%
 \begin{rep@theorem}}%
 {\end{rep@theorem}}}
\newtheorem{thm}{Theorem}%[section]
\newtheorem{obs}{Observation}
\begin{document}

%%%%%%%%%%%%%%%%%%%%%%%%%%%%%%%%%%%%%%%%%%%%%%%%%%%%%%%%%%%%%%%%%%%

\title{Unbounded quantum advantage in communication complexity measured by distinguishability}

%%%%%%%%%%%%%%%%%%%%%%%%%%%%%%%%%%%%%%%%%%%%%%%%%%%%%%%%%%%%%%%%%%%

\author{Satyaki Manna}
%\email{}
\affiliation{School of Physics, Indian Institute of Science Education and Research Thiruvananthapuram, Kerala 695551, India}
\author{Anubhav Chaturvedi}
%\email{}
\affiliation{Faculty of Applied Physics and Mathematics,
 Gda{\'n}sk University of Technology, Gabriela Narutowicza 11/12, 80-233 Gda{\'n}sk, Poland}
\affiliation{International Centre for Theory of Quantum Technologies (ICTQT), University of Gda{\'n}sk, 80-308 Gda\'nsk, Poland}
\author{Debashis Saha}
%\email{saha@iisertvm.ac.in}
\affiliation{School of Physics, Indian Institute of Science Education and Research Thiruvananthapuram, Kerala 695551, India}

%%%%%%%%%%%%%%%%%%%%%%%%%%%%%%%%%%%%%%%%%%%%%%%%%%%%%%%%%%%%%%%%%%%

\begin{abstract}
Communication complexity is a fundamental aspect of information science, concerned with the amount of communication required to solve a problem distributed among multiple parties. 
The standard quantification of one-way communication complexity relies on the minimal dimension of the communicated systems. In this paper, we measure the communication complexity of a task by the minimal \emph{distinguishability} required to accomplish it, while leaving the dimension of the communicated systems unconstrained. {Distinguishability} is defined as the maximum probability of correctly guessing the sender's input from the message, quantifying the message's distinctiveness relative to the sender's input. This measure becomes especially relevant when maintaining the confidentiality of the sender's input is essential. After establishing the generic framework, we focus on three relevant families of communication complexity tasks---the random access codes, equality problems defined by graphs and the pair-distinguishability tasks. We derive general lower bounds on the minimal classical distinguishability as a function of the success metric of these tasks. We demonstrate that quantum communication outperforms classical communication presenting explicit protocols and utilizing semi-definite programming methods. In particular, we demonstrate \emph{unbounded quantum advantage} for random access codes and Hadamard graph-based equality problems. Specifically, we show that the classical-to-quantum ratio of minimal distinguishability required to achieve the same success metric escalates polynomially and exponentially with the complexity of these tasks, reaching arbitrarily large values. 
\end{abstract}

\maketitle

%%%%%%%%%%%%%%%%%%%%%%%%%%%%%%%%%%%%%%%%%%%%%%%%%%%%%%%%%%%%%%%%%%% 

\section{Introduction} 

Communication complexity is a cornerstone of information science, finding applications in diverse fields such as distributed computation, query complexity, game theory, integrated circuit design, cryptography, and streaming algorithms \cite{ccbook,TRbook,rao_yehudayoff_2020,Yao}. In its most fundamental form, communication tasks involve two entities—a sender and a receiver—aiming to compute a function based on both of their inputs. The traditional measure for one-way communication complexity pertains to the minimal dimension of the systems the sender communicates to accomplish the task. Quantum theory provides an unbounded advantage over classical communication protocols in several communication complexity problems. 
In particular, the ratio between the minimum dimensions of classical and quantum messages required to accomplish a given task can increase to arbitrarily large values  \cite{Raz,RevModPhys.82.665,quantumFP,buhrman1998quantum,DEWOLF,pnas,saha2019,saha2023}.

In this study, instead of minimal dimension, we take \emph{minimal distinguishability} as the measure of communication complexity. 
Distinguishability is the maximum average probability of distinguishing (guessing) the sender's input from the communicated message, quantifying the extent to which information about the sender's input data is revealed from the transmission. In particular, distinguishability is also called guessing probability information and corresponds to a one-shot version of accessible information based on min-entropies \cite{Tavakoli2022informationally,InformationMeasure}. Therefore, a higher required distinguishability corresponds to a higher amount of information revealed about the sender's input and a higher amount of necessary communication. Conversely, a lower minimum distinguishability indicates a lower amount of necessary communication. 

Besides being a discrete measure, the dimension as a measure of communication only provides a partial characterization. On the other hand, distinguishability is a continuous measure of communication and is better suited for providing a complete picture. 
Additionally, accomplishing a communication task while maintaining minimal distinguishability becomes crucial when the confidentiality of the sender's input is a concern. Traditional communication complexity, measured by the minimum dimension of the communicated systems, overlooks the privacy of the sender's input. On the other hand, cryptography prioritizes communication privacy but does not inherently consider the complexity of the communication process \cite{Ekert,QCryp}. Hence, our approach addresses both aspects within a unified framework. Moreover, as distinguishability corresponds to the information content in the pre-measurement ensemble, the framework remains independent of the particulars of the type quantum or classical communication protocols. Finally, unlike minimum dimension, distinguishability is critically relevant to the study of ontological models of operational theories \cite{bod,PhysRevA.109.032212}.

To commence, we discuss the generic framework of ''communication complexity based on distinguishability''. This framework is related to the framework for communication tasks with bounded distinguishability introduced in Refs. \cite{bod,Tavakoli2020informationally}, further developed in Refs. \cite{Tavakoli2022informationally,pauwels2024informationcapacityquantumcommunication}.
We outline methodologies for determining the minimal distinguishability required to achieve a given value of the success metric associated with communication complexity problems for classical and quantum communication protocols. To quantify the quantum advantage, we take the ratio between the minimum required distinguishability for classical and quantum communication to achieve a specific value of the success metric of the task. 

Moving to our main results, we examine three classes of communication complexity tasks with diverse applications. The first class involves the random access codes (RAC), where the sender possesses a string of $d$its of size $n$, and the receiver seeks to guess a randomly chosen $d$it from that string \cite{RAC1,RAC,RACA}. The second category is an equality problem defined by graphs, where the objective is to determine whether the sender's and receiver's inputs are the same or not, given that the inputs randomly occur according to a graph \cite{saha2023}. The third class is the pair-distinguishability task, a generalized version of the task introduced in \cite{bod}. We analytically establish generically applicable lower bounds on the distinguishability as a function of the success metric and the specifications of the task [i.e., $(d,n)$ for RAC, the properties of the graph for the equality problem, and $n$ for the pair-distinguishability tasks], for classical communication protocols. We then demonstrate several instances of quantum advantage in these tasks, either by presenting explicit quantum protocols or by employing semi-definite programming methods, i.e., we show that quantum protocols, in general, require less distinguishability than classical communication protocols to achieve the same success metric in these tasks. We also provide a comprehensive study of quantum advantage for the RAC with $d,n=2,3$ and for equality problems based on odd-cycle graphs. 

Among our diverse findings, the demonstration of an unbounded quantum advantage in two families of tasks stands out. Specifically, we show that the ratio between distinguishability in classical and quantum communication to attain the same success metric increases with the level $d$ of Alice's input for the RACs, and similarly, for Hadamard graphs \cite{FR}, this ratio grows exponentially with the size of the graph. 
 Furthermore, we present quantum protocols that offers an advantage in terms of distinguishability but not with respect to traditional dimensional quantification for the same task. 
%In particular, the unbounded quantum advantage with increasing $d$ is not possible with the traditional dimensional quantification of communication complexity and, hence, is exclusive to our framework. 
Finally, we summarize insights gained into communication complexity measured by distinguishability and enlist future research directions.

\section{communication complexity based on distinguishability}\label{II}

First, let us establish the notation $[K]$ to represent the set $\{1, \cdots, K\}$ for any positive integer $K$. In a one-way communication complexity task, Alice, the sender, is assigned an input variable $x$ sampled from the set $[N]$. The probability of obtaining $x$ is denoted by $p_x$, with $p_x = 1/N$ for a uniform distribution. In each round of the task, based on the value of $x$, Alice transmits a message (either classical or quantum) to Bob, the receiver. Bob is also given an input variable $y \in [M]$, and depending on $y$ and the received message, he produces an output $z \in [D]$. Multiple rounds of this task are performed to collect frequency statistics represented by conditional probabilities $p(z|x, y)$. The objective is to maximize a success metric of the form:
\be \label{SGen}
\mathcal{S} = \sum_{x,y,z} c(x,y,z) p(z|x,y) ,
\ee
where $c(x,y,z)\geqslant 0$, and the metric is normalized by ensuring $\sum_{x, y, z} c(x, y, z) = 1$ so that the maximum value of $\mathcal{S}$ is 1. 

The communication is subject to a constraint on the distinguishability of inputs $x$. The distinguishability is defined as the maximum average guessing probability of determining input $x$ from the message (classical or quantum) using the optimal measurements:
\be \label{pD1}
\mathcal{D} = \max_{M}\Bigg\{ \sum_{x} p_x p(z=x|x,M)\Bigg\},
\ee
where the maximization is over all possible measurements $M$ in the theory. Allowing $\mathcal{D}$ to be 1 would enable Alice to simply transmit the input $x$, say via classical $N$-dimensional systems, making the task trivial. The task becomes nontrivial when $\mathcal{D}$ is less than 1. It is important to note that the success metric of communication complexity task is typically associated with guessing some function of $x$ and $y$, i.e.,
\begin{equation} \label{Sgf}
\mathcal{S} = \sum_{x, y} c(x, y) p(z = f(x, y) | x, y).
\end{equation}
 We note here that the success metric \eqref{Sgf} is a particular instance of the general success metric given in \eqref{SGen}, wherein the output $z$ should be a specific function of inputs $x$ and $y$. In this paper, we focus on family of tasks that have a success metric of the form \eqref{Sgf}. We will now find how such a communication task translates into classical and quantum communication protocols, and what is meant by the term \emph{quantum advantage} in this context.

\subsection{Classical communication}

While communicating classically, Alice sends a $d$-labelled message $m\in [d]$ depending on the input $x$. Any encoding strategy is described by probability distributions of sending $m$ given input $x$, $\{p_e(m|x)\}$, where $\sum_m p_e(m|x)=1$ for all $x$. Similarly, the generic decoding strategy for providing Bob's output is defined by the probability distributions over output $z$ given $m,y$ denoted as $\{p_d(z|y,m)\}$, where $\sum_z p_d(z|y,m)=1$ for all $m,y$. It is important to note that there is no restriction on the dimension $d$, implying $m$ can take an arbitrarily large number of distinct values. Using this fact, it can be shown, as detailed in \cite{bod} (Observation 4), that sharing classical randomness is not beneficial for this task. For an generic classical encoding and decoding, the resulting conditional probability is expressed as follows:
\be \label{pC}
p(z|x,y) = \sum_m\sum_{x,y} p_e(m|x) p_d(z|y,m) .
\ee 
Let us now obtain the expression of distinguishability given an encoding. By substituting the probabilities \eqref{pC} into \eqref{pD1} and leveraging the property that the maximum value of any convex combination of a set of numbers is the maximum number from that set, we obtain
\bea 
\mathcal{D}_C &=& \max_{\{p_d(z|m)\}}\sum_m\sum_{x} p_x p_e(m|x) p_d(z=x|m)\nonumber\\
              &=& \max_{\{p_d(z|m)\}} \sum_m \left(\sum_x p_x p_e(m|x) p_d(z=x|m) \right)\nonumber\\
              &=& \sum_m\max_{x}\bigg\{ p_x p_e(m|x)\bigg\}.
\eea
For the uniform distribution $p_x=1/N$, the above simplifies to
\be \label{DC}
\mathcal{D}_C =
\frac{1}{N}\sum_m\max_{x}\bigg\{ p_e(m|x)\bigg\}.
\ee 
It turns out that the expression of the optimal value of the success metric \eqref{SGen} in classical communication, denoted by $\mathcal{S}_C,$ can be expressed only in terms of encoding $\{p_e(m|x)\}$. 
By substituting \eqref{pC} into \eqref{SGen} and using the fact that the maximum value of any convex combination of a set of numbers is the maximum number from that set, we find
\begin{widetext}
\bea \label{scd}
\mathcal{S}_C &=& \max_{\{p_e(m|x)\},\{p_d(z|y,m)\} } \sum_{x,y,z} \sum_m c(x,y,z)  p_e(m|x) p_d(z|y,m) \nonumber \\
&=& \max_{\{p_e(m|x)\},\{p_d(z|y,m)\} } \sum_{y,z} \sum_m \left( \sum_x c(x,y,z)  p_e(m|x)  \right) p_d(z|y,m)\nonumber \\
&=& \max_{\{p_e(m|x)\} } \sum_{y,m} \max \left\{ \left( \sum_x c(x,y,z=1)  p_e(m|x)  \right), \cdots, \left( \sum_x c(x,y,z=D)  p_e(m|x)  \right) \right\}.
\eea 
For the particular case of binary-outcome, $z\in [2]$, the above expression simplifies to 
\bea \label{scdz2}
\mathcal{S}_C
%&=& \max_{\{p_e(m|x)\} } \sum_{y,m} \max \left\{  \sum_x c(x,y,z=1)  p_e(m|x)  ,  \sum_x c(x,y,z=2)  p_e(m|x) \right\} \nonumber \\
&=& \max_{\{p_e(m|x)\}} \left[ 1 - \sum_{m,y} \min \left\{ \sum_{x} c(x,y,z=1)p_e(m|x), \sum_{x} c(x,y,z=2)p_e(m|x) \right\}  \right] .
\eea 
\end{widetext}

\subsection{Quantum communication}

In quantum communication, Alice transmits a quantum state $\rho_x$ acting on $\mathbbm{C}^d$, and Bob's output is the result of a quantum measurement described by sets of positive semi-definite operators $\{M_{z|y}\}_{z,y}$, with $z\in [D], y \in [M]$, and $\sum_z M_{z|y}=\I $. It is worth noting that the dimension of quantum systems, denoted as $d$, can take arbitrary values. This setup results in the following statistics:
\be \label{pD}
p(z|x,y) = \tr (\rho_x M_{z|y}) .
\ee 
For any such quantum strategy we denote the resultant success metric \eqref{SGen} in a generic communication complexity task as,
\be \label{SGenQ}
\mathcal{S}_Q = \sum_{x,y,z} c(x,y,z) \tr (\rho_x M_{z|y}),
\ee

The distinguishability of the sender's quantum states $\{\rho_x\}$ is given by,
\be \label{DistinguishabilityQuantum}
\begin{split}
 \mathcal{D}_Q=\underset{\{M_x\}}{\ \ \max\ \ } & \sum_x p_x \tr(\rho_x M_{x}))\\
\text{s.t.\ \ } &  M_x \geqslant 0, \ \ \forall x \in [n]\\
&  \sum_x M_x = \I, 
\end{split}
\ee 
which forms a straightforward semi-definite program. Let us now make the following observation, which will come in handy later.
\begin{obs} \label{fact:d/N}
The distinguishability of $N$ quantum states that belong to  $\mathbbm{C}^d$ sampled form uniform distribution $p_x=1/N$,
    \be \label{Dd/N}
    \mathcal{D}_Q \leqslant \frac{d}{N}.
    \ee
\end{obs}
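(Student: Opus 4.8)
The plan is to work directly from the semi-definite program \eqref{DistinguishabilityQuantum} defining $\mathcal{D}_Q$ and bound the objective function term by term, without ever solving the SDP. First I would fix an arbitrary feasible measurement $\{M_x\}_{x\in[N]}$, i.e.\ $M_x\geqslant 0$ for all $x$ and $\sum_x M_x=\I$ on $\mathbbm{C}^d$, and estimate $\sum_x \frac{1}{N}\tr(\rho_x M_x)$ from above by a quantity that does not depend on the states $\rho_x$ at all.

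The key observation is that each $\rho_x$ is a density operator on $\mathbbm{C}^d$, so $0\leqslant\rho_x\leqslant\I$, which gives $\I-\rho_x\geqslant 0$. Since the trace of a product of two positive semi-definite operators is non-negative, $\tr\big((\I-\rho_x)M_x\big)\geqslant 0$, and hence $\tr(\rho_x M_x)\leqslant\tr(M_x)$ for every $x$. Summing this inequality over $x$ and using the normalization $\sum_x M_x=\I$,
\be
\sum_x \tr(\rho_x M_x)\;\leqslant\;\sum_x \tr(M_x)\;=\;\tr\Big(\sum_x M_x\Big)\;=\;\tr(\I)\;=\;d.
\ee
Dividing by $N$ and then taking the maximum over all feasible $\{M_x\}$ (the right-hand side being independent of the measurement) yields $\mathcal{D}_Q\leqslant d/N$, which is \eqref{Dd/N}.

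Since this is a one-line estimate, there is no real obstacle; the only point requiring a moment's care is justifying $\tr(\rho_x M_x)\leqslant\tr(M_x)$, which I would do either via the positivity of $\tr$ on products of positive semi-definite operators as above, or equivalently by noting $\tr(\rho_x M_x)\leqslant\lambda_{\max}(\rho_x)\tr(M_x)\leqslant\tr(M_x)$ since all eigenvalues of a density operator are at most $1$. I would also remark that the bound is tight: when the $\rho_x$ are mutually orthogonal pure states in $\mathbbm{C}^d$ (possible precisely when $N\leqslant d$), choosing $M_x=\rho_x$ attains $\mathcal{D}_Q=d/N$ if $N=d$, so the estimate cannot be improved in general.
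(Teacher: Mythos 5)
Your proof is correct and follows essentially the same route as the paper's: bound $\tr(\rho_x M_x)\leqslant\tr(M_x)$ using $\rho_x\leqslant\I$ and $M_x\geqslant 0$, then sum and use $\sum_x M_x=\I$ to get $\tr(\I)=d$. You merely spell out the intermediate inequality (and add a tightness remark) more explicitly than the paper does.
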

\begin{proof} 
For uniform distribution, the distinguishability of a given set of quantum states is given by 
\be 
\mathcal{D}_Q = \max_{\{M_x\}} \frac1N \sum_x \tr(\rho_x M_{x})  \leqslant \frac{1}{N} \sum_{x} \tr( M_{x})  = \frac{d}{N} .
\ee 
Here, we use the fact that the $\tr (\rho_x) =1$, $M_{x}\geqslant 0, \sum_x M_{x}= \I$.
\end{proof}    

%W
In Appendix \ref{SDP}, we present easy-to-implement semi-definite programming (SDP) techniques to retrieve the maximum quantum value of the success metric given an upper bound on distinguishability based on the methods originally formulated in \cite{Chaturvedi2021characterising,Tavakoli2022informationally}.

\subsection{Quantifying quantum advantage}

In this work, our objective is to measure the quantum advantage in communication complexity with respect to the distinguishability of $x$. Quantum advantage is established when, to achieve a given value of a success metric $\mathcal{S}$, the minimum value of classical distinguishability $\mathcal{D}_C$ surpasses the value of quantum distinguishability $\mathcal{D}_Q$. To quantify this, we draw an analogy from the standard notion of communication complexity and consider the ratio of the distinguishability in classical and quantum communication, 
$\mathcal{D}^\mathcal{S}_C / \mathcal{D}^\mathcal{S}_Q $, given that the success metric attains at least a certain value, $\mathcal{S}$, in both, classical and quantum communication. If this ratio exceeds 1, it indicates a quantum advantage. In particular, we aim to derive a relationship of the following form:
\be 
\mathcal{F} (\mathcal{S}_C) \leqslant \mathcal{D}_C
\ee 
where $\mathcal{F}(\mathcal{S}_C)$ is some function on $\mathcal{S}_C$ and specifications of the task. This enables us to establish a lower bound on $\mathcal{D}_C$ given a value of $\mathcal{S}$. Subsequently, we present quantum communication protocols achieving the same value $\mathcal{S}$ such that $\mathcal{D}_Q$ is less than the obtained lower bound on $\mathcal{D}_C$. The advantage is considered unbounded if the ratio $\mathcal{D}^\mathcal{S}_C / \mathcal{D}^\mathcal{S}_Q $ can become arbitrarily large.

Additionally, the quantum advantage can be quantified in the reverse direction by determining the ratio between the success metric, $\mathcal{S}^\mathcal{D}_Q / \mathcal{S}^\mathcal{D}_C$, under the condition that the distinguishability cannot be greater than the value, $\mathcal{D}$, in both, quantum and classical communication.

\section{Random access codes}\label{III}

In this task, Alice is provided with a string of $n$ dits, denoted as $x=x_1x_2\cdots x_n$, uniformly randomly selected from the set encompassing all possible strings. Each $x_y$ in the string belongs to the set $[d]$ for all $y \in [n]$. During communication, Alice conveys information about the acquired string using either classical or quantum systems. Bob's task is to deduce the $y$-th dit, with $y$ being randomly chosen from the set $[n]$. The success metric is determined by the average success probability, defined through 
\be \label{Snd}
\mathcal{S}(n,d) = \frac{1}{nd^n} \sum_{x,y} p(z=x_y|x,y) .
\ee
Here the task is uniquely defined by the values of $n$ and $d$. Utilizing \re\eqref{DC}\blk, we can express the distinguishability for any specific encoding as follows:
\be\label{scd2}
\mathcal{D}_C=\frac{1}{d^n}\sum_m \max_{x}\bigg\{ p_e(m|x)\bigg\}.
\ee

We now demonstrate a generically applicable lower bound on the minimum distinguishability required to achieve a given value of the success metric for $(n,d)$ RACs,

 \begin{thm}
     For any $n$ and $d$, the following holds in classical communication
     \be\label{scnd} 
n \mathcal{S}_{C}(n,d) +1-n \leqslant \mathcal{D}_C.
     \ee 
 \end{thm}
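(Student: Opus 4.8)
\emph{Proof proposal.} The plan is to condition on the message Alice sends, prove the bound separately for each message, and then average. Fix any classical protocol; by the observation that shared randomness does not help here (Ref.~\cite{bod}) it is specified by an encoding $\{p_e(m|x)\}$ together with a decoding. Since the RAC success metric \eqref{Snd} decomposes additively over the pairs $(m,y)$, the decoding that maximises the success for a given encoding outputs, for each $(m,y)$, a value $z$ attaining $\max_z g_m(z,y)$ with $g_m(z,y):=\sum_{x:\,x_y=z}p_e(m|x)$; cf.\ \eqref{scd}. Because the left-hand side of \eqref{scnd} is increasing in $\mathcal{S}_C$ while $\mathcal{D}_C$ depends only on the encoding through \eqref{scd2}, it suffices to prove the stronger statement that \emph{every} encoding, equipped with its success-optimal decoding, satisfies \eqref{scnd}. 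To this end introduce the message probabilities $q_m:=\tfrac{1}{d^n}\sum_x p_e(m|x)$ (so $\sum_m q_m=1$) and, for $q_m>0$, the posterior $P(x\mid m):=p_e(m|x)/(d^n q_m)$ over strings. A short calculation shows $\mathcal{D}_C=\sum_m q_m D_m$ with $D_m:=\max_x P(x\mid m)$, and the optimal $\mathcal{S}_C(n,d)=\sum_m q_m S_m$ with $S_m:=\tfrac1n\sum_y\max_z P(x_y=z\mid m)$. Hence \eqref{scnd} follows by averaging over $m$ with weights $q_m$ the per-message inequality $D_m\geqslant nS_m+1-n$.

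That per-message inequality is the heart of the argument, and I would prove it by a union bound over the $n$ coordinates. For a fixed message $m$, let $\hat{x}^{(m)}$ be the string of coordinate-wise modes, $\hat{x}^{(m)}_y:=\arg\max_z P(x_y=z\mid m)$. Then, bounding $D_m\geqslant P(x=\hat{x}^{(m)}\mid m)$ and applying the union bound to the complementary event,
\begin{widetext}
\be
1-D_m \;\leqslant\; 1-P(x=\hat{x}^{(m)}\mid m) \;=\; P(\exists\,y:\ x_y\neq\hat{x}^{(m)}_y \mid m) \;\leqslant\; \sum_{y}\big(1-\max_{z}P(x_y=z\mid m)\big) \;=\; n\,(1-S_m),
\ee
\end{widetext}
which rearranges to $D_m\geqslant nS_m+1-n$, as required. (Equivalently, one can run the same union bound directly on the unnormalised weights $p_e(m|\cdot)$ to get $\max_x p_e(m|x)\geqslant\sum_y\max_z g_m(z,y)-(n-1)d^n q_m$ and sum over $m$, avoiding conditioning altogether.)

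The one genuinely delicate point — the step that makes the bound come out — is the choice of comparison string in the union bound: one must take $\hat{x}^{(m)}$ to be the string of \emph{marginal} modes, so that the per-coordinate error probabilities $1-\max_z P(x_y=z\mid m)$ are exactly the terms that sum to $n(1-S_m)$, while $D_m$ enters only via the trivial bound $D_m\geqslant P(x=\hat{x}^{(m)}\mid m)$; using the global maximum-a-posteriori string instead does not produce matching terms. Everything else is routine bookkeeping (decomposing $\mathcal{S}_C$ and $\mathcal{D}_C$ over messages, and the final average). As sanity checks, for $n=1$ the inequality \eqref{scnd} reduces to $\mathcal{S}_C(1,d)\leqslant\mathcal{D}_C$, an equality since the $(1,d)$ RAC \emph{is} the distinguishability task; and it is already saturated by the elementary encoding $m=x_1$ of the $(2,2)$ RAC, for which $\mathcal{D}_C=\tfrac12$ and $\mathcal{S}_C=\tfrac34$.
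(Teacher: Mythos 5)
Your proof is correct and is essentially the paper's argument in probabilistic dress: your per-message union bound $1-P(x=\hat{x}^{(m)}\mid m)\leqslant\sum_y\big(1-\max_z P(x_y=z\mid m)\big)$ is, after multiplying through by $d^n q_m$, exactly the paper's counting inequality $\chi(m)\leqslant (n-1)\sum_x p_e(m|x)+p_e(m|x=d^*)$, with your string of coordinate-wise modes $\hat{x}^{(m)}$ playing the role of the paper's $(d_1^*,\ldots,d_n^*)$ (the paper counts how often each string appears in $\sum_y\sum_{x|x_y=d_y^*}p_e(m|x)$, which is the complementary view of the same bound). The decomposition over messages and the final averaging likewise match the paper's summation over $m$, so no gaps remain.
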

 \begin{proof}
First, we express this average success metric \eqref{Snd} in terms of the general form of success metric \eqref{SGen} as
\be 
    c(x,y,z)= \frac{1}{nd^n} \times
 \begin{cases}
   1, & \text{if $z=x_y$}\\
    $0$,             & \text{otherwise.} 
\end{cases}
\ee 
Substituting this expression of $c(x,y,z)$ into \eqref{scd}, we get,
 \begin{widetext}
     \be\label{cc1}
    \mathcal{S}_C(n,d) = \max_{p_e(m|x)} \Bigg[\frac{1}{nd^n} \sum\limits_{m} \Bigg\{ \underbrace{ \sum^n_{y=1} \max \bigg\{\sum_{x|x_y=1} p_e(m|x) ,\cdots, \sum_{x|x_y=d} p_e(m|x)\bigg\} }_{\chi(m)} \Bigg\} \Bigg] .
    \ee 
\end{widetext}
The underlined expression in \eqref{cc1} is denoted by $\chi(m)$. Here, the notation $x|x_y=d$ in the summation subscript indicates that the summation is taken over all $x_i$, \emph{except}  $x_y$, which is set to $d$. The next step is to show that the following relation holds for every $m$;
    \be \label{bc}
    \chi(m) \leqslant (n-1)\sum_x p_e(m|x) +\max_x p_e(m|x).
    \ee   
Let us fix a particular value of $m$. Given $m$, for every $y$, we denote the value of $x_y$ by $d^*_y$ for which 
\be 
\sum_{x|x_y=d_y^*} p_e(m|x) \geqslant \sum_{x|x_y} p_e(m|x), \quad \forall x_y .
\ee 
In other words, for every $y$,
\bea  
    && \max \bigg\{\sum_{x|x_y=1} p_e(m|x), \cdots, \sum_{x|x_y=d} p_e(m|x)\bigg\} \nonumber \\ 
    &=& \sum_{x|x_y=d_y^*} p_e(m|x) ,
\eea 
where $d^*_y$ is from the set $\{1,\cdots,d\}$. As a consequence,
\bea 
    \chi(m) &=& \sum_{x|x_1=d_1^*} p_e(m|x_1=d_1^*,x_2,\cdots,x_n) \nonumber \\
   && +\sum_{x|x_2=d_2^*} p_e(m|x_1,x_2=d_2^*,\cdots,x_n)+\cdots \nonumber \\
   && + \sum_{x|x_n=d_n^*} p_e(m|x_1,\cdots,x_{n-1},x_n=d_n^*)  .
\eea 
 Each of the $n$ terms on the right-hand side of the above equation includes the probabilities $p_e(m|x_1=d_1^*,x_2=d_2^*,\cdots,x_n=d_n^*)$, as the summation encompasses all $x_i$, except for one $x_y$, which is already fixed to $d_y^*$. 
Thus, in the right-hand side of the above equation, the term $p_e(m|x_1=d_1^*,x_2=d_2^*,\cdots,x_n=d_n^*)$ appears $n$ times and any other term can appear at most $(n-1)$ times. This implies, 
\bea  \label{chim}
\chi(m) &\leqslant &  p_e(m|x_1=d_1^*,x_2=d_2^*,\cdots,x_n=d_n^*) \nonumber \\
&& + (n-1) \sum_{x}p_e(m|x) .
\eea  
Moreover, using the obvious fact
\be
p_e(m|x_1=d_1^*,x_2=d_2^*,\cdots,x_n=d_n^*) \leqslant \max_x p_e(m|x)
\ee
in \eqref{chim}, we obtain Eq. \eqref{bc} for the particular value of $m$. This argument and, thus, Eq. \eqref{bc} holds for any $m$.
Subsequently, summing over $m$ on both sides of \eqref{bc} yields 
    \be\label{ac}
    \sum_{m} \chi(m) \leqslant \sum_{m} \sum_x(n-1) p_e(m|x) +\sum_{m} \max_x p_e(m|x),
    \ee
which simplifies to
\be\label{ec}
    \sum_{m} \chi(m) \leqslant (n-1)d^n+d^n\mathcal{D}_C
\ee
because of \eqref{scd2}. Finally, by replacing the above upper bound on $\sum_m \chi(m)$ into \eqref{cc1}, we get 
\be \label{scnd22}
\mathcal{S}_C(n,d) \leqslant \frac{(n-1)d^n+d^n\mathcal{D}_C}{nd^n} ,
\ee 
which is equivalent to \eqref{scnd}.
\end{proof}

%Consider $(2,d)$ RAC. \be \mathcal{S}^{\mathcal{D}=\frac1d}_Q = \frac{1}{2}\left(1+\frac{1}{\sqrt{d}} \right) > \mathcal{S}^{\mathcal{D}=\frac1d}_C = \frac{1}{2}\left(1+\frac{1}{d} \right) \ee 
%On the other hand,\be \mathcal{D}^{\mathcal{S}^*}_Q \leqslant \frac1d ,  \  \mathcal{D}^{\mathcal{S}^*}_C = \frac{1}{\sqrt{d}}. \ee 
%where $\mathcal{S}^* = \frac{1}{2}\left(1+\frac{1}{\sqrt{d}} \right)$. Thus, $\mathcal{D}^{\mathcal{S}^*}_C/\mathcal{D}^{\mathcal{S}^*}_Q \geqslant \sqrt{d}$ increases with $d$. \\

Figure \ref{fig:rac} displays a thorough investigation of quantum advantages for RACs with parameters $(n=2, d=3)$, $(n=3, d=2)$, and $(n=2, d=3)$. Following this, we show that the quantum advantage in RAC can be unbounded.

%\begin{widetext}

\begin{figure}[h!]
    \centering
    %\begin{subfigure}[b]{0.5\textwidth}
        %\centering
        \includegraphics[width=0.5\textwidth]{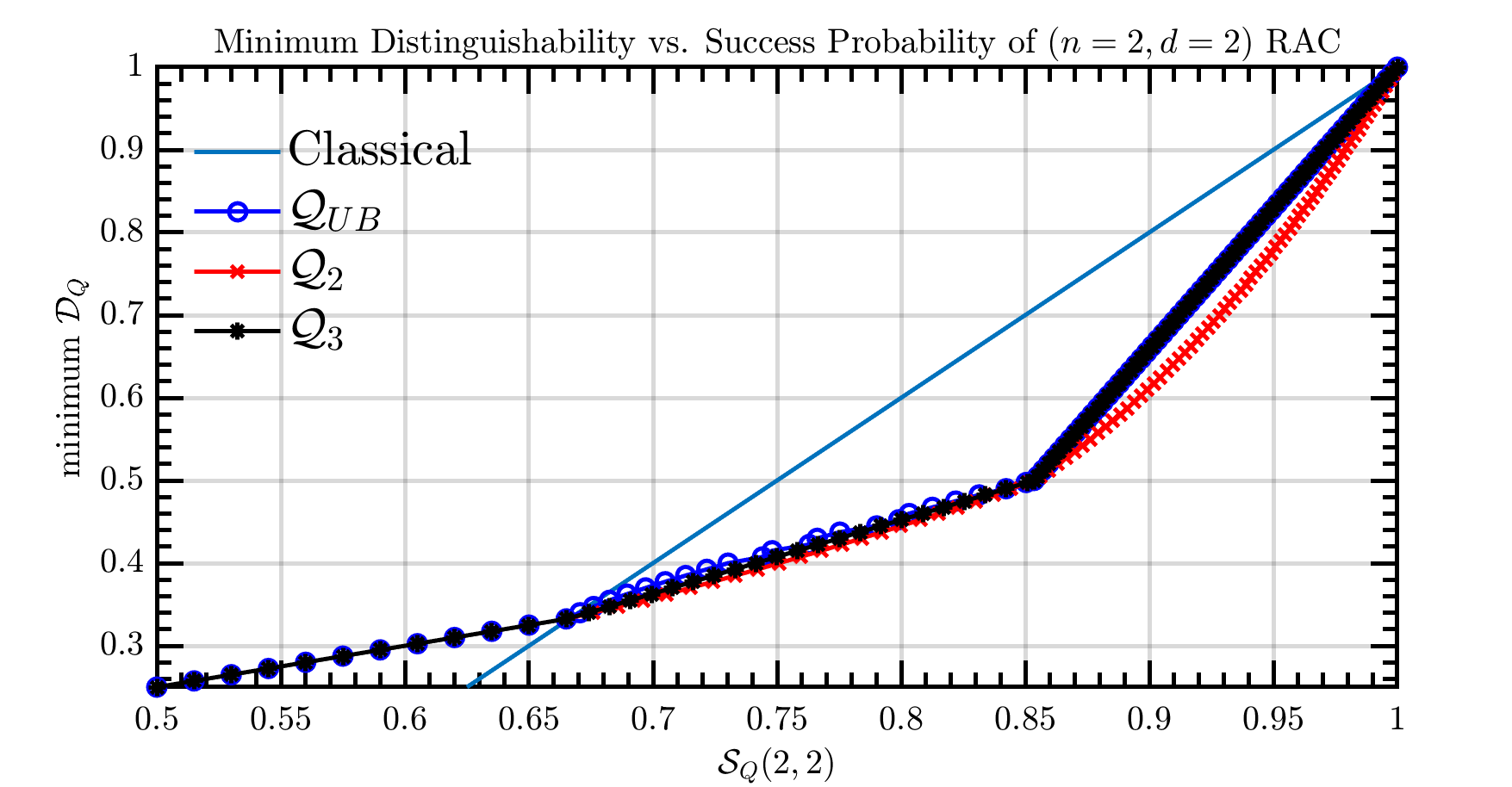}
        %\caption{Distinguishability vs. success probability of $(d=2,n=2)$ RAC.}
    %\end{subfigure}
    %\hfill
   % \begin{subfigure}[b]{0.5\textwidth}
       % \centering
        \includegraphics[width=0.5\textwidth]{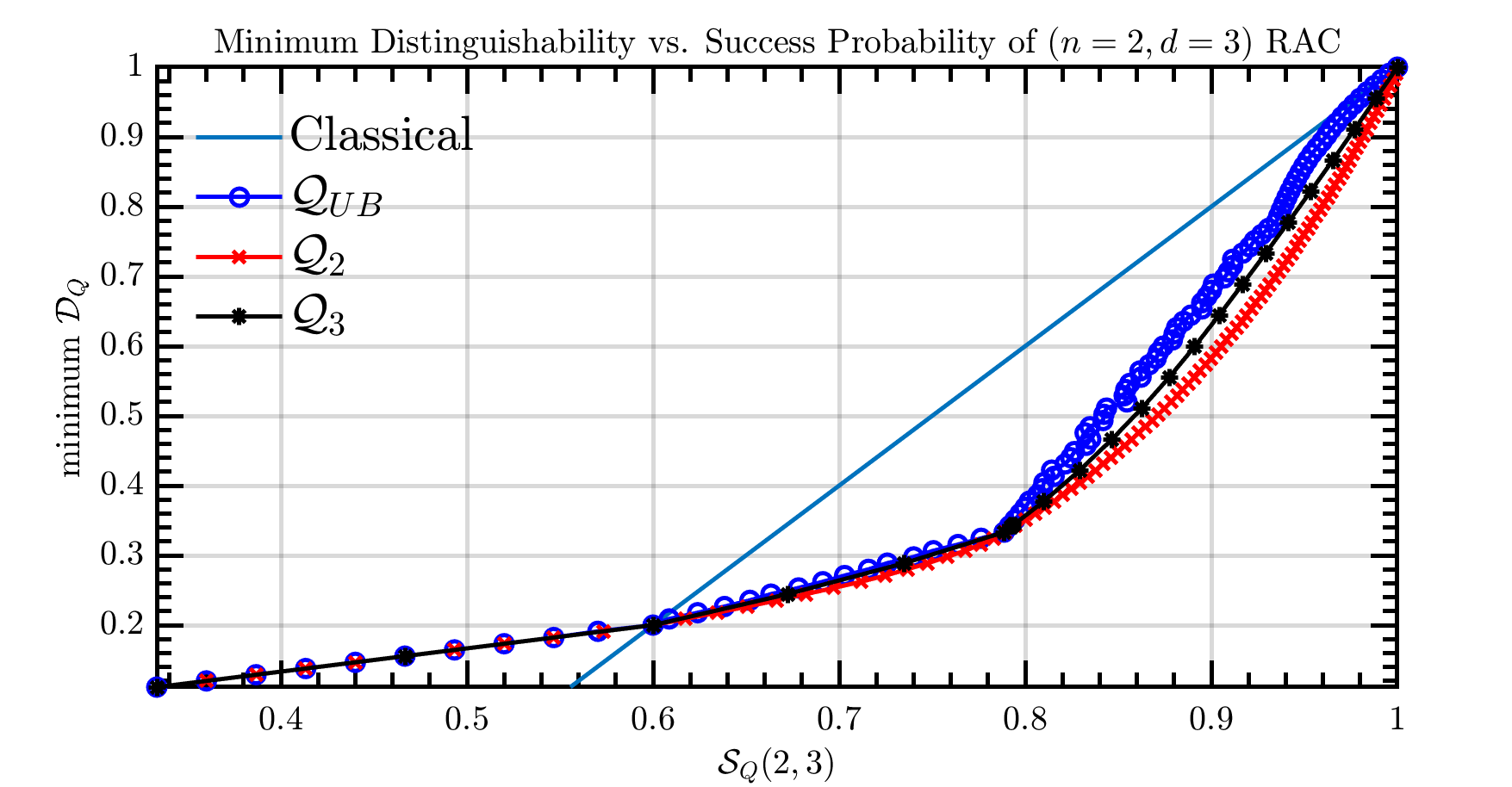}
        %\caption{RAC}
   % \end{subfigure}
   % \hfill
   % \begin{subfigure}[b]{0.5\textwidth}
    %    \centering
        \includegraphics[width=0.5\textwidth]{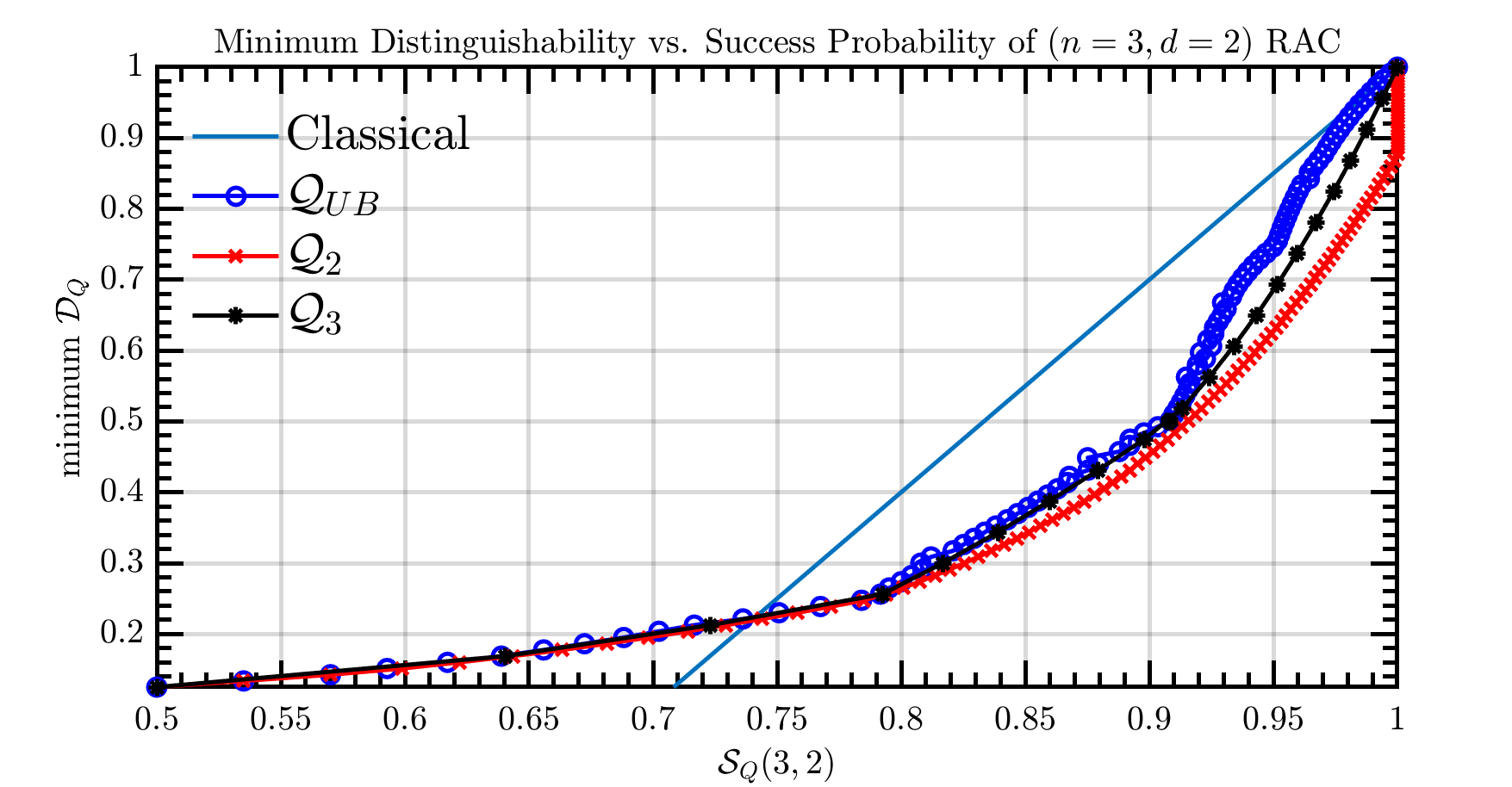}
        %\caption{RAC}
   % \end{subfigure} 
    
    \caption{Distinguishability in classical and quantum communication vs. the average success probability is presented for three different RACs. The lower bound on $\mathcal{D}_C$ from Eq. \eqref{scnd} is labelled as ''Classical''. By varying values of the success probability $\mathcal{S}(n,d)$, the upper bound on the minimum $\mathcal{D}_Q$ is determined by implementing the see-saw semi-definite programming method presented in Appendix \ref{SDP}. Referred to as $\mathcal{Q}_{UB}$, this bound is obtained for $d^n$-dimensional quantum systems and is certainly achievable in quantum theory. The other two lower bounds are derived through the semi-definite hierarchy presented in Appendix \ref{SDP},  up to the second and third levels, denoted as $\mathcal{Q}_2$ and $\mathcal{Q}_3$. These lower bounds on minimum $\mathcal{D}_Q$ are valid for arbitrary dimensional quantum states. The quantum advantage is present whenever $\mathcal{Q}_{UB}$ falls below the classical lower bound. Notice, that for some values of $\mathcal{S}(n,d)$, the classical values are less than $\mathcal{Q}_{UB}$ in the plots, as Eq. \eqref{scnd} only provides a lower bound on $\mathcal{D}_C$, which is not tight.
    }
    \label{fig:rac}
\end{figure}

%\end{widetext}
\begin{thm}\label{th2}
    There exist quantum strategies for random access codes with $n=2$ such that 
    \be\label{qs1}
\frac{\mathcal{D}^\mathcal{S}_C}{  \mathcal{D}^\mathcal{S}_Q } \geqslant \sqrt{d}, \ \text{ where } \ \mathcal{S}(2,d) = \frac{1}{2} \left( 1 +\frac{1}{\sqrt{d}} \right).
    \ee 
\end{thm}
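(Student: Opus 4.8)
The plan is to pit the classical lower bound of Theorem~1 against an explicit $d$-dimensional quantum code, and then to read off the quantum distinguishability almost for free from Observation~\ref{fact:d/N}. First I would specialize the classical bound \eqref{scnd} to $n=2$, which gives $\mathcal{D}_C \geqslant 2\mathcal{S}_C(2,d)-1$. Hence any classical protocol achieving the success value $\mathcal{S}(2,d)=\tfrac12\!\left(1+1/\sqrt d\right)$ must have $\mathcal{D}_C \geqslant 1/\sqrt d$; that is, $\mathcal{D}^{\mathcal{S}}_C \geqslant 1/\sqrt d$ at this value of $\mathcal{S}$.

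Next I would exhibit the single $d$-level quantum $2\to 1$ random access code. Fix two mutually unbiased bases of $\mathbbm{C}^d$, say the computational basis $\{|j\rangle\}_{j\in[d]}$ and the Fourier basis $\{|f_k\rangle\}_{k\in[d]}$, so that $|\langle j|f_k\rangle|^2 = 1/d$. For input $x=x_1x_2$, Alice sends the pure state $\rho_{x_1x_2}=|\psi_{x_1x_2}\rangle\!\langle\psi_{x_1x_2}|$, where $|\psi_{x_1x_2}\rangle$ is the eigenvector of $|x_1\rangle\!\langle x_1|+|f_{x_2}\rangle\!\langle f_{x_2}|$ with the largest eigenvalue; Bob measures in the computational basis when $y=1$ and in the Fourier basis when $y=2$. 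The one calculation needed here is that the top eigenvalue of a sum of two rank-one projectors of overlap $1/\sqrt d$ equals $1+1/\sqrt d$, and that by the reflection symmetry of this operator both $\tr(\rho_{x_1x_2}|x_1\rangle\!\langle x_1|)$ and $\tr(\rho_{x_1x_2}|f_{x_2}\rangle\!\langle f_{x_2}|)$ equal $\tfrac12\!\left(1+1/\sqrt d\right)$; averaging over $x$ and $y$ in \eqref{Snd} then gives $\mathcal{S}_Q(2,d)=\tfrac12\!\left(1+1/\sqrt d\right)$, matching the target exactly. (This is the familiar single-$d$-level QRAC; its value can also be confirmed via the SDP of Appendix~\ref{SDP}.)

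The crux is then a one-liner: this protocol uses $N=d^2$ states, all living in $\mathbbm{C}^d$, under the uniform distribution, so Observation~\ref{fact:d/N} forces $\mathcal{D}_Q \leqslant d/N = 1/d$. Thus $\mathcal{D}^{\mathcal{S}}_Q \leqslant 1/d$, and combining with the classical bound yields $\mathcal{D}^{\mathcal{S}}_C/\mathcal{D}^{\mathcal{S}}_Q \geqslant (1/\sqrt d)/(1/d)=\sqrt d$, which is the claimed inequality; since $\sqrt d \to \infty$ this also exhibits the unbounded advantage.

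I expect the only real work to be verifying that the quantum code attains precisely $\tfrac12(1+1/\sqrt d)$ — writing down the states and POVM elements and evaluating $\tr(\rho_x M_{x_y|y})$ via the rank-two projector computation above — whereas the distinguishability side requires no optimization whatsoever, being dictated purely by the dimension count of Observation~\ref{fact:d/N}. One point worth recording explicitly: the displayed ratio legitimately compares a lower bound on $\mathcal{D}^{\mathcal{S}}_C$ with the value $\mathcal{D}_Q$ actually achieved by the explicit protocol (hence an upper bound on $\mathcal{D}^{\mathcal{S}}_Q$), which is exactly the direction needed to certify the advantage; no tightness of either bound is asserted.
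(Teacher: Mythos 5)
Your proposal is correct and follows essentially the same route as the paper: specialize the classical bound \eqref{scnd} to $n=2$ to get $\mathcal{D}_C\geqslant 1/\sqrt d$, invoke the known $d$-level QRAC based on two mutually unbiased bases achieving $\mathcal{S}=\tfrac12(1+1/\sqrt d)$ (which the paper cites rather than constructs, but your explicit eigenvector construction is the standard realization of that same strategy), and apply Observation \ref{fact:d/N} to bound $\mathcal{D}_Q\leqslant 1/d$. The only difference is that you spell out the MUB protocol the paper imports by reference; the logical skeleton is identical.
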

\begin{proof}
There exists a known quantum strategy involving $d$-dimensional quantum states and measurements for the case of $n=2$ and arbitrary $d$. This strategy achieves $\mathcal{S_Q}(2,d)= 1/2 \left( 1 +1/\sqrt{d} \right)$ by employing measurements performed by Bob in two mutually unbiased bases in $\mathbbm{C}^d$ \cite{rac2d}. According to \eqref{Dd/N}, this quantum strategy must adhere to the constraint $\mathcal{D}_Q\leqslant 1/d$.
Conversely, if we set $n=2$ and $\mathcal{S}_C = 1/2 \left(1 + 1/\sqrt{d}\right)$ in \eqref{scnd}, we deduce that $1/\sqrt{d} \leqslant \mathcal{D}_C$. Consequently, \eqref{qs1} holds.
\end{proof}

\section{Equality problem defined by graphs} \label{IV}

The communication task is defined by an arbitrary graph $G$ having $N$ vertices.  Alice and Bob receive input from the vertex set of $G$, i.e., $x,y\in [N]$, sampled from the uniform distribution. Let $G_x$ denote the set of vertices in $G$ that are adjacent to $x$, with $N_x$ representing the number of vertices adjacent to $x$. In the task, there is a promise that $x=y$ or $x \in G_y$. Bob's aim is to differentiate between these two cases, giving the correct output $z=1$ if $x=y$ and $z=2$ if $x\in G_y$. Hence, the success metric,  
\be \label{SG}
\mathcal{S}(G) = \frac{1}{\sum_x N_x+N} \sum_{y=1}^N \left( p(1|x=y,y) + \sum_{x\in G_y} p(2|x,y) \right) .
\ee 
The quantum advantage for this task in terms of the dimension of the communicated systems has been studied in \cite{saha2019,saha2023}. Concerning the distinguishability of the sender's input, we have the following result.

%We now propose a communication task where Whenever $y\in (x,N_x)$, Bob's aim is to guess the function,\[    f(x,y)=  \begin{cases}   $0$,& \text{if $y=x$}.\\
%    $1$,             & \text{if $y\in N_x$}. \end{cases}\] 

\begin{thm}
For any graph $G$, the following holds in classical communication,
\be\label{scg}
\frac{1}{N\alpha(G) } \left( \left(\sum_x N_x + N \right)\left(\mathcal{S}_C(G) -1 \right) + N \right) \leqslant \mathcal{D}_C ,
\ee 
where $\alpha(G)$ is the independence number of the graph $G$.
\end{thm}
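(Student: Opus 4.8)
The plan is to use the binary-outcome ($z\in[2]$) structure of the equality task to reduce \eqref{scg} to a combinatorial inequality that can be proved separately for each message label $m$. First I would read off the cost coefficients from \eqref{SG}: writing $T := \sum_x N_x + N$, we have $c(x,y,1) = 1/T$ exactly when $x=y$, $c(x,y,2) = 1/T$ exactly when $x\in G_y$, and all other coefficients vanish. Substituting into the binary-outcome formula \eqref{scdz2} and pulling out $1/T$ yields
\be
\mathcal{S}_C(G) = \max_{\{p_e(m|x)\}}\left[ 1 - \frac{1}{T}\sum_{m,y}\min\left\{ p_e(m|y),\ \sum_{x\in G_y}p_e(m|x) \right\} \right].
\ee
Using the elementary identity $a - \min\{a,b\} = (a-b)^+ := \max\{a-b,\,0\}$ together with $\sum_m p_e(m|y) = 1$, I would rewrite $T\big(\mathcal{S}_C(G) - 1\big) + N = \sum_{m,y}\big( p_e(m|y) - \sum_{x\in G_y}p_e(m|x) \big)^+$ for the optimal encoding. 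Since $N\mathcal{D}_C = \sum_m \max_x p_e(m|x)$ by \eqref{DC}, proving \eqref{scg} is equivalent to establishing
\be
\sum_{m,y}\left( p_e(m|y) - \sum_{x\in G_y}p_e(m|x) \right)^+ \leqslant \alpha(G)\sum_m \max_x p_e(m|x),
\ee
which it suffices to check label by label.

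So the heart of the argument is: fixing $m$ and writing $q_x := p_e(m|x) \geqslant 0$, show $\sum_{y}\big( q_y - \sum_{x\in G_y}q_x \big)^+ \leqslant \alpha(G)\max_x q_x$. The key step — which I expect to be the only non-routine part — is to observe that the set $Y^+ := \{\, y : q_y > \sum_{x\in G_y}q_x \,\}$ of vertices contributing a strictly positive term is an \emph{independent set} of $G$. Indeed, if $y$ and $y'$ were adjacent and both lay in $Y^+$, then since $y'\in G_y$ and every $q_x\geqslant 0$ we would get $q_y > \sum_{x\in G_y}q_x \geqslant q_{y'}$, and symmetrically $q_{y'} > q_y$, a contradiction. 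Hence $|Y^+|\leqslant\alpha(G)$, and bounding each surviving term crudely, $q_y - \sum_{x\in G_y}q_x \leqslant q_y \leqslant \max_x q_x$, gives $\sum_{y}( \cdots )^+ \leqslant |Y^+|\,\max_x q_x \leqslant \alpha(G)\max_x q_x$. Summing over $m$ and recalling $\sum_m \max_x p_e(m|x) = N\mathcal{D}_C$ then yields \eqref{scg}.

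The only subtlety worth spelling out is the reading of the statement: the displayed chain in fact holds for \emph{every} classical protocol, with $\mathcal{S}_C(G)$ replaced by that protocol's own success value $\mathcal{S}$; since $\mathcal{S}\mapsto\frac{1}{N\alpha(G)}\big(T(\mathcal{S}-1)+N\big)$ is increasing in $\mathcal{S}$, the bound is strongest for a success-optimal protocol, which is precisely \eqref{scg}. I would also note $\sum_x N_x = \sum_y N_y$ (both equal twice the number of edges of $G$), so that $T$ is symmetric in the roles of $x$ and $y$; this symmetry, equivalently $y'\in G_y \Leftrightarrow y\in G_{y'}$, is exactly what the independence argument relies on.
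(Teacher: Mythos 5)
Your proof is correct and takes essentially the same route as the paper's: after reading off the binary-outcome form of $\mathcal{S}_C(G)$, both arguments reduce \eqref{scg} to a per-message-label inequality (your surplus bound is a rearrangement of the paper's Eq.~\eqref{ge}) whose combinatorial content is that the vertices where $p_e(m|y)$ exceeds the neighbour sum can be charged to an independent set of size at most $\alpha(G)$, each contributing at most $\max_x p_e(m|x)$. Your direct two-line verification that $Y^+$ is an independent set is a cleaner packaging of the paper's partition of $T$ into $T_I$ and $T\setminus T_I$, and makes the key step more transparent.
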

    \begin{proof} First, we express the average success metric \eqref{SG} in terms of its general form \eqref{SGen} by defining
    \be 
    c(x,y,z)= \frac{1}{(\sum_x N_x + N)} \times 
 \begin{cases}
  1 ,& \text{if $z=1, x=y$}\\
   1 ,& \text{if $z=2, x \in G_y$}\\
    $0$,            & \text{otherwise.} 
\end{cases}
\ee  
Substituting these values of $c(x,y,z)$ into \eqref{scdz2}, we obtain
  \bea \label{scg2}
  \mathcal{S}_C(G) &=& 1-\min_{p_e(m|x)}\frac{1}{\sum_x N_x+N} \times \nonumber \\
  && \sum_m\Bigg[\underbrace{\sum^{N}_{y=1}\min \Bigg\{p_e(m|x=y),\sum_{x\in G_y} p_e(m|x)\Bigg\}}_{\chi(m)}\Bigg]. \nonumber \\
  \eea  
The underlined term in \eqref{scg2} is denoted by $\chi(m)$. The distinguishability of input variable $x$ in this task is given by,
\be\label{scd3}
\mathcal{D}_C=\frac{1}{N}\sum_m \max_{x}\bigg\{ p_e(m|x)\bigg\}.
\ee
For this proof, our primary goal is to show
\be\label{ge}
\chi(m) + \alpha(G)\max_x p_e(m|x) \geqslant \sum_x p_e(m|x), \ \  \forall m.
\ee
Let us fix a value of $m$. For any given encoding $\{p_e(m|x)\}$, consider the set $T \subseteq [N]$ such that 
\be 
\min \Bigg\{p_e(m|y),\sum_{x\in G_y} p_e(m|x)\Bigg\} = \sum_{x\in G_y} p_e(m|x), \ \forall y \in T .
\ee
Subsequently, we can write 
\be \label{chimGsim}
\chi(m) = \sum_{y\notin T} p_e(m|y) + \sum_{y\in T} \left( \sum_{x\in G_y} p_e(m|x) \right) .
\ee 
Further, we divide the set $T$ into two partitions: First, $T_I$ containing the vertices that do not share any common edge, and second, $T\setminus T_I$. 
Note that the subset $T_I$ must be an independent set of the induced subgraph made of all the vertices belonging to $T$. 
For every $y \in T\setminus T_I,$ there exists another $y'  \in T\setminus T_I$ such that $y \in G_{y'}$. This observation allows us to express \eqref{chimGsim} as
\bea 
\chi(m) &  = & \sum_{y\notin T} p_e(m|y) + \sum_{y\in T\setminus T_I} p_e(m|y) \nonumber \\
&& + \sum_{y\in T_I} \left( \sum_{x\in G_y} p_e(m|x) \right) \nonumber \\
& \geqslant & \sum_{y\notin T} p_e(m|y) + \sum_{y\in T\setminus T_I} p_e(m|y) .
\eea  
Adding $\sum_{y\in T_I} p_e(m|y)$ on both sides of the above equation, we get
\be \label{chimb1}
\chi(m) + \sum_{y\in T_I} p_e(m|y) \geqslant \sum_{y=1}^N p_e(m|y) .
\ee 
For any given set $T$, the cardinality of the subset $T_I$ is, at most, $\alpha(G)$, the independence number of $G$. This implies
\be 
\alpha(G) \max_y p_e(m|y) \geqslant \sum_{y\in [T_I]} p_e(m|y) .
\ee 
By substituting the above upper bound in \eqref{chimb1}, we recover \eqref{ge}. Subsequently, summing over $m$ on both sides of \eqref{ge} results 
    \be\label{ac1}
    \sum_{m} \chi(m) + \alpha(G)\sum_{m} \max_x p_e(m|x) \geqslant \sum_{m} \sum_x p_e(m|x),
    \ee
which, because of \eqref{scd3}, reduces to
    \be\label{ec1}
    \sum_{m} \chi(m) \geqslant  N(1-\alpha(G)\mathcal{D}_C ) .
    \ee
Finally, by replacing the above upper bound on $\sum_m \chi(m)$ into \eqref{scg2} we get
\be 
\mathcal{S}_C(G) \leqslant  1- \frac{N(1-\alpha(G)\mathcal{D}_C)}{\sum_x N_x+N} ,
\ee 
which, after some rearrangements, becomes \eqref{scg}.
\end{proof}

An important inquiry is to identify the simplest graph where quantum methods exhibit an advantage over classical ones. Based on the classical constraint given by \eqref{scg}, we observe that any graph with vertices up to four will have no quantum advantage. In order to show there is no quantum advantage, it is sufficient to consider non-isomorphic graphs. Among connected graphs, there are six non-isomorphic graphs with four vertices and two non-isomorphic graphs with three vertices (Fig. \ref{fig:8graphs}).
\begin{figure}[hbt!]
        \centering
        \includegraphics[scale=0.9]{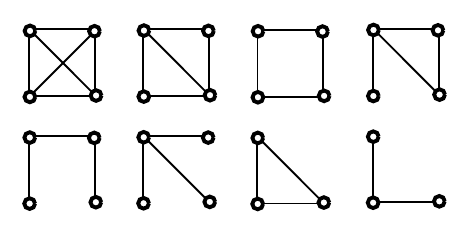}
\caption{Non-isomorphic connected graphs with three and four vertices.}
        \label{fig:8graphs}
\end{figure}

For each of these eight graphs, we use SDP hierarchy presented in Appendix \ref{SDP}, to obtain an upper bound on the value of $(\mathcal{D}_C - \mathcal{D}_Q)$ for all possible values of $\mathcal{S}(G)$. This optimization yields the value of zero, which reveals that there is no advantage for any of these graphs, irrespective of the values of $\mathcal{S}(G)$. We will now show that the communication task defined by a graph with five vertices does exhibit quantum advantages.

Before delving into this, let us consider a form of quantum strategy for the equality problem given a graph $G$. The strategy is defined by a set of $N$ quantum states $\{\ket{\psi_x}\}_{x=1}^N \in \mathbbm{C}^d,$ with 
\be \label{qspsi}
    \rho_x = |\psi_x\ra\!\la \psi_x|, \ M_{1|y} = |\psi_y\ra\!\la \psi_y|, \ M_{2|y} = \I - |\psi_y\ra\!\la \psi_y|  ,
\ee 
where $\rho _x$ is the quantum state sent by Alice for input $x$, and the binary outcome measurements performed by Bob on the received quantum state for input $y$ are expressed as $\{M_{1|y},M_{2|y}\}$. For this strategy, a straightforward calculation leads to 
\be \label{sqn}
    \mathcal{S}_Q (G) =     
1 - \frac{1}{\sum_x N_x+ N} \sum_y\sum_{x\in G_y}|\bra{\psi_x}\psi_y\rangle|^2 .
\ee
Simultaneously, it holds that $\mathcal{D}_Q \leqslant d/N$.

\begin{thm} \label{th4}
Let us consider $N$-cyclic graph (denoted by $\Delta_N$) such that $N\geqslant 5$ and odd. There exist quantum strategies with two-dimensional systems such that 
    \be \label{ratio}
\frac{\mathcal{D}^\mathcal{S}_C(\Delta_N)}{\mathcal{D}^\mathcal{S}_Q(\Delta_N)} = \frac{N}{N-1}\left({1-2\sin^2\left(\frac{\pi}{2N}\right)}\right) > 1,
    \ee 
    for $\mathcal{S} = 1-\frac{2}{3}\sin^2\big(\frac{\pi}{2N}\big)$.
\end{thm}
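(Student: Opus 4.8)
The plan is to pair the classical lower bound \eqref{scg}, specialized to the odd cycle, with an explicit qubit protocol of the form \eqref{qspsi}, and then take the ratio of the two.

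First I would record the combinatorial data of $\Delta_N$: every vertex has exactly two neighbours, so $N_x=2$ for all $x$ and $\sum_x N_x+N=3N$, while the independence number of an odd cycle is $\alpha(\Delta_N)=(N-1)/2$. Substituting these into \eqref{scg} and simplifying, every classical protocol obeys $\mathcal{D}_C\geq \frac{2}{N-1}\bigl(3\mathcal{S}_C-2\bigr)$. Evaluating at the target value $\mathcal{S}_C=\mathcal{S}=1-\frac23\sin^2(\pi/2N)$ and using $3\mathcal{S}-2=1-2\sin^2(\pi/2N)=\cos(\pi/N)$, this gives $\mathcal{D}^{\mathcal{S}}_C\geq \frac{2}{N-1}\bigl(1-2\sin^2(\pi/2N)\bigr)$.

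For the quantum side I would take the real qubit states $\ket{\psi_x}=\cos(\phi_x/2)\ket{0}+\sin(\phi_x/2)\ket{1}$ with Bloch angles $\phi_x=\frac{(N-1)(x-1)\pi}{N}$ for $x\in[N]$ --- the $N$ pure states equally spaced on a great circle, traversed in a winding order --- together with the measurements $M_{1|y}=\ket{\psi_y}\!\bra{\psi_y}$, $M_{2|y}=\I-\ket{\psi_y}\!\bra{\psi_y}$ as in \eqref{qspsi}. The one genuine computation is that every edge of the cycle carries the same overlap: for consecutive vertices $|\langle\psi_x|\psi_{x+1}\rangle|^2=\cos^2\!\bigl(\frac{(N-1)\pi}{2N}\bigr)=\sin^2(\pi/2N)$, and --- this is where the oddness of $N$ enters --- the closing edge satisfies $|\langle\psi_N|\psi_1\rangle|^2=\cos^2\!\bigl(\frac{(N-1)^2\pi}{2N}\bigr)=\sin^2(\pi/2N)$ as well, since $\frac{(N-1)^2\pi}{2N}\equiv \frac{\pi}{2}+\frac{\pi}{2N}\pmod{\pi}$ when $N$ is odd. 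Feeding this into \eqref{sqn} (the double sum counts each of the $N$ edges twice) gives $\mathcal{S}_Q(\Delta_N)=1-\frac{2N\sin^2(\pi/2N)}{3N}=\mathcal{S}$, exactly the target. Since the states are qubits drawn from the uniform distribution, Observation~\ref{fact:d/N} gives $\mathcal{D}_Q\leq 2/N$, and this is in fact saturated by the square-root measurement $M_x=\frac{2}{N}\ket{\psi_x}\!\bra{\psi_x}$, which is a legitimate POVM because the ensemble average is $\frac{1}{N}\sum_x\ket{\psi_x}\!\bra{\psi_x}=\frac12\I$; hence $\mathcal{D}^{\mathcal{S}}_Q=2/N$.

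Assembling the two sides, $\mathcal{D}^{\mathcal{S}}_C/\mathcal{D}^{\mathcal{S}}_Q=\frac{N}{N-1}\bigl(1-2\sin^2(\pi/2N)\bigr)$ (reading $\mathcal{D}^{\mathcal{S}}_C$ as the bound just derived; the true minimum can only be larger, so a strict advantage follows a fortiori). It remains to verify this quantity exceeds $1$, i.e., $N\bigl(1-\cos(\pi/N)\bigr)<1$; bounding $1-\cos\theta\leq\theta^2/2$ gives $N(1-\cos(\pi/N))\leq \pi^2/(2N)\leq \pi^2/10<1$ for every $N\geq5$. I do not anticipate a serious obstacle: the only delicate step is choosing the winding qubit ensemble so that the closing edge of the odd cycle inherits the same inner product --- a one-line trigonometric identity using $N$ odd --- after which the argument is just substitution into \eqref{scg}, \eqref{sqn} and Observation~\ref{fact:d/N}.
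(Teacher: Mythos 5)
Your proposal is correct and follows essentially the same route as the paper: specialize the classical bound \eqref{scg} to the odd cycle with $\alpha(\Delta_N)=(N-1)/2$, use the winding qubit ensemble \eqref{psixNpd} in the strategy \eqref{qspsi} to hit $\mathcal{S}=1-\frac{2}{3}\sin^2(\pi/2N)$, and bound $\mathcal{D}_Q\leqslant 2/N$ via Observation~\ref{fact:d/N}. You additionally verify two details the paper leaves implicit --- the closing-edge overlap (where oddness of $N$ enters) and the saturation $\mathcal{D}_Q=2/N$ by the square-root measurement --- which in fact makes the equality in \eqref{ratio} more rigorous.
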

\begin{proof}
Let us consider a quantum strategy given by \eqref{qspsi} where the states $\{\ket{\psi_x}\}_x \in \mathbbm{C}^2$ are given by 
\begin{equation} \label{psixNpd}
    \ket{\psi_x} = \cos\left(\frac{(x-1)\beta}{2}\right)\ket{0}+
               \sin\left(\frac{(x-1)\beta}{2}\right)\ket{1} ,
\end{equation}
and $\beta=\big(\frac{(N-1)\pi}{N}\big)$.
These states form a regular $N$-gon in the $X$-$Z$ plane of the Bloch sphere such that the angle between the Bloch vectors of $\ket{\psi_x}$ and $\ket{\psi_{x\pm 1}}$ is $\beta$. The Bloch vectors of these states for $N=5$ and $N=7$ are shown in Fig. \ref{fig:states}.
\begin{figure}[hbt!]
    \centering
    \includegraphics[scale=0.31]{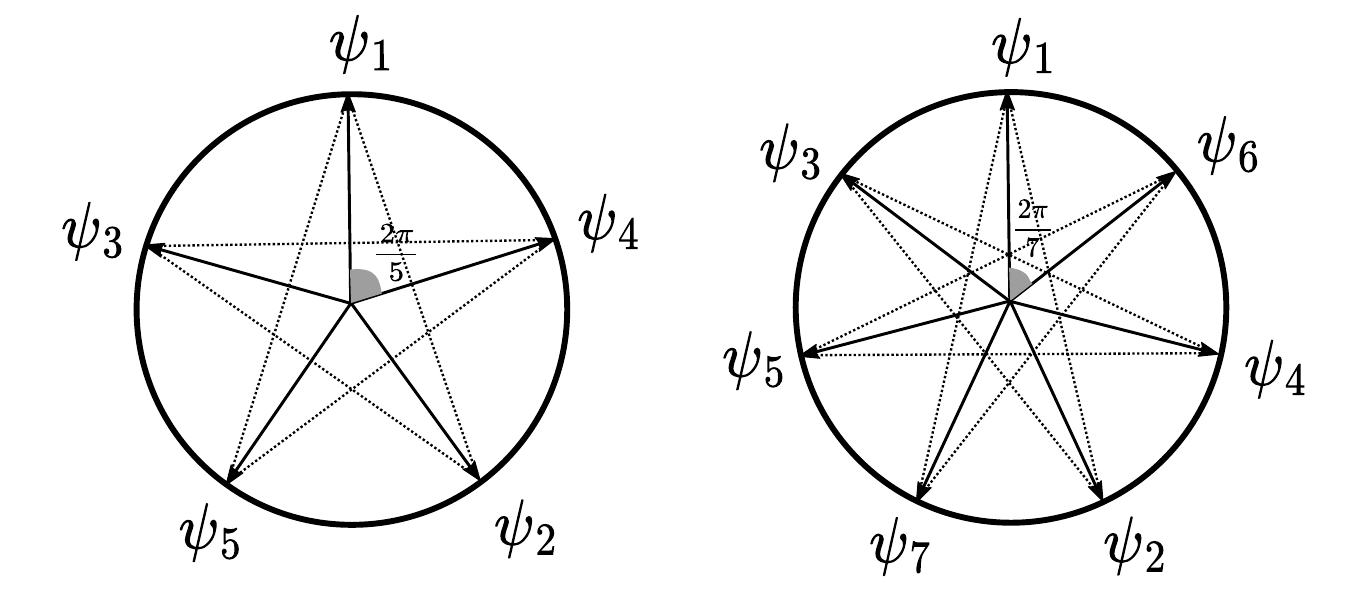}
    \caption{Bloch vectors of qubit states given by \eqref{psixNpd} for $N=5,7$.}
    \label{fig:states}
\end{figure}

For this quantum strategy, $|\bra{\psi_x}\psi_y\rangle|=\sin\big(\frac{\pi}{2N}\big)$ for any pair of $x,y,$ that are connected by an edge. By substituting the expression in \eqref{sqn} and $N_x=2, \forall x$, we obtain
\be\label{sqgraph}
\mathcal{S}_Q (\Delta_N) = 1-\frac{2}{3}\sin^2\Big(\frac{\pi}{2N}\Big).
\ee
Replacing this value of $\mathcal{S}_Q$ in \eqref{scg} with $\alpha(\Delta_N)=(N-1)/2$ and other specifications of $N$-cyclic graph, we get
\be\label{scdn1}
\mathcal{D}^\mathcal{S}_C(\Delta_N) \geqslant \frac{2}{N-1}\Big(1-2\sin^2\frac{\pi}{2N}\Big).
\ee
From the fact that 
$\mathcal{D}^\mathcal{S}_Q(\Delta_N) \leqslant 2/N$ and the above relation \eqref{scdn1}, we obtain \eqref{ratio}. Figure \ref{fig:plot} depicts how the ratio \eqref{ratio} changes with $N$.
%We can see that $\frac{\mathcal{D}_C(\Delta_N)}{\mathcal{D}_Q(\Delta_N)}$ is always greater than $1$. 
\end{proof}  
\begin{figure}[hbt!]
    \centering
\includegraphics[scale=0.35]{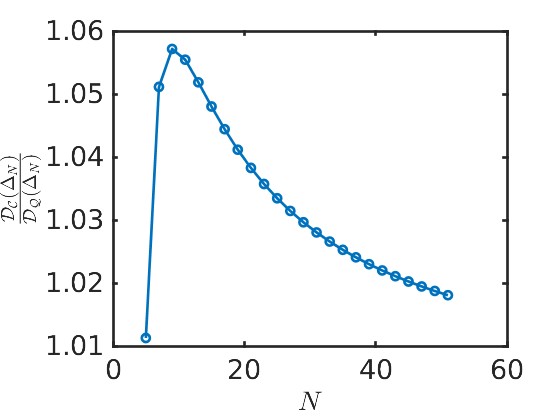}
    \caption{The ratio between classical and quantum distinguishability of input states as a function of $N$ for the communication tasks defined by regular $N$-cyclic graph, where $\mathcal{S}=1-(2/3)\sin^2(\pi/2N)$ with $N\geqslant5$ and odd. The ratio reduces to $1$ as $N\rightarrow{\infty}$.}
    \label{fig:plot}
\end{figure}

Consider the set of vectors in $\mathbbm{C}^d$ of the form 
\be \label{xs}
\ket{\psi_x} = (1/\sqrt{d})\left[(-1)^{x_1},(-1)^{x_2}, \ldots, (-1)^{x_d} \right]^T,
\ee 
where every $x_i\in \{0,1\}$.
This set comprises $2^d$ distinct vectors, and two vectors are orthogonal when the values of $x_i$ differ in exactly $d/2$ places. The orthogonality relations among these vectors can be represented through a graph, commonly known as the Hadamard graph $(H_d)$, where the vertices symbolize the vectors, and two vertices are connected by an edge if the respective vectors are orthogonal. Consequently, one can consider the communication task based on $H_d$, previously introduced as the distributed Deutsch-Jozsa task \cite{RevModPhys.82.665}. Hereafter, we demonstrate that the ratio $\mathcal{D}^\mathcal{S}_C/\mathcal{D}^\mathcal{S}_Q$ experiences exponential growth with $d$.

\begin{thm}
    For the Hadamard graph $H_d$ where $d$ is divisible by 4,
\be \label{Hd}
\frac{\mathcal{D}^\mathcal{S}_C}{  \mathcal{D}^\mathcal{S}_Q } \geqslant \frac{(1.005)^d}{d}  , \text{ where } \mathcal{S} = 1 .
\ee 
\end{thm}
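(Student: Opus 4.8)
The plan is to obtain $\mathcal{D}^{\mathcal{S}}_Q$ from an exact (zero-error) quantum protocol, obtain a lower bound on $\mathcal{D}^{\mathcal{S}}_C$ by specializing the classical bound \eqref{scg} to $\mathcal{S}=1$, and then amplify the resulting ratio using the Frankl--Rödl bound on the independence number of $H_d$.

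For the quantum side I would use the strategy \eqref{qspsi} built from the vectors $\ket{\psi_x}\in\mathbbm{C}^d$ of \eqref{xs}. By the definition of the Hadamard graph, an edge $x\in G_y$ means that $x$ and $y$ differ in exactly $d/2$ coordinates, so $\langle\psi_x|\psi_y\rangle=\tfrac1d(d-2\cdot\tfrac d2)=0$; substituting into \eqref{sqn} kills every term in the sum and gives $\mathcal{S}_Q(H_d)=1$. Since these $N=2^d$ states live in $\mathbbm{C}^d$, Observation \ref{fact:d/N} with $p_x=1/2^d$ immediately gives $\mathcal{D}^{\mathcal{S}}_Q\leqslant d/2^d$ at $\mathcal{S}=1$. (The value $\mathcal{S}=1$ is also achievable classically---e.g.\ by having Alice transmit a proper colouring of $H_d$---so the ratio $\mathcal{D}^{\mathcal{S}}_C/\mathcal{D}^{\mathcal{S}}_Q$ is well defined.)

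For the classical side I would substitute $\mathcal{S}_C(G)=1$ into \eqref{scg}. The term $(\sum_x N_x+N)(\mathcal{S}_C(G)-1)$ vanishes, and the inequality collapses to $1/\alpha(H_d)\leqslant\mathcal{D}^{\mathcal{S}}_C$, where $\alpha(H_d)$ is the independence number of $H_d$. Combining the two bounds,
\be
\frac{\mathcal{D}^{\mathcal{S}}_C}{\mathcal{D}^{\mathcal{S}}_Q}\ \geqslant\ \frac{1/\alpha(H_d)}{d/2^d}\ =\ \frac{2^d}{d\,\alpha(H_d)} .
\ee

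The one remaining---and essential---ingredient is an exponential separation between $2^d$ and $\alpha(H_d)$. Here I would invoke the Frankl--Rödl theorem \cite{FR}: for $d$ divisible by $4$, a family of binary strings of length $d$ no two of which are at Hamming distance exactly $d/2$ has cardinality at most $(2-\varepsilon)^d$ for an absolute constant $\varepsilon>0$, and in fact the constant is large enough that $2-\varepsilon\leqslant 2/1.005$. Hence $\alpha(H_d)\leqslant(2/1.005)^d$ and $\mathcal{D}^{\mathcal{S}}_C/\mathcal{D}^{\mathcal{S}}_Q\geqslant 2^d/(d(2/1.005)^d)=(1.005)^d/d$, which tends to infinity with $d$. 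This last step is the main obstacle: the quantum protocol and the classical bound are essentially immediate from material already in the paper, whereas the sub-$2^d$ bound on the independence number of the Hadamard graph is the nontrivial combinatorial input, and it is precisely what makes the advantage exponential (and hence unbounded).
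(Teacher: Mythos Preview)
Your proposal is correct and follows essentially the same approach as the paper: the same quantum strategy \eqref{qspsi} with the vectors \eqref{xs} to get $\mathcal{S}_Q=1$ and $\mathcal{D}_Q\leqslant d/2^d$ via Observation~\ref{fact:d/N}, the same specialization of \eqref{scg} at $\mathcal{S}_C=1$ to obtain $\mathcal{D}_C\geqslant 1/\alpha(H_d)$, and the same appeal to Frankl--R\"odl (stated in the paper as $\alpha(H_d)\leqslant 1.99^d$, equivalent to your $(2/1.005)^d$) to conclude. Your parenthetical remark that $\mathcal{S}=1$ is classically achievable (via a proper colouring) is a useful clarification that the paper leaves implicit.
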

\begin{proof}
Clearly, the quantum strategy of the form \eqref{qspsi} such that $\ket{\psi_x}$ given by \eqref{xs}, perfectly accomplishes the task, leading to $\mathcal{S}(H_d) = 1$. Additionally, in this strategy, as per \eqref{fact:d/N}, $\mathcal{D}_Q \leqslant d/2^d$. On the classical side, according to \eqref{scg}, we know that $\mathcal{S}_C(H_d)=1$ only if $\mathcal{D}_C=1/\alpha(H_d)$. The result established by Frankl-R\"{o}dl \cite{FR} implies that $\alpha(H_d) \leqslant 1.99^d$, specifically when $d$ is divisible by $4$ (refer to Theorem~1.11 in \cite{FR}). As a consequence, we have \eqref{Hd}.
\end{proof}

It is worth observing that the expression on the right-hand side of \eqref{Hd} is not an increasing function for smaller values of $d$, but it exponentially increases for $d\geqslant 1124$. In particular, $d\geqslant 1124 < 2^{11}$, implies that we require less than $11$-qubits which have been physically realized on quantum computers \cite{Benchmarking,PhysRevX.12.031010,PhysRevX.8.021012,16-qubit}. In quantum communication, genuine $10$-qubit quantum states have been experimentally realized and semi-device independently certified \cite{PhysRevLett.102.010401}. To cite a realisation of the unbounded advantage, we calculate the ratio of advantage for $15$-qubit system and find out $(\mathcal{D}^\mathcal{S}_C/\mathcal{D}^\mathcal{S}_Q ) \approx 10^{66}$. Such an quantum advantage is near-term realizable.

Moving forward, we will outline sufficient criteria for a graph to exhibit quantum advantage in the task defined by that graph.

\begin{thm}
    For any graph $G$, say $\beta(G)$ is the minimum dimension $d$ such that there exists $N$ number of quantum states $\{\ket{\psi_x}\}_{x=1}^N\in \mathbbm{C}^{d}$ satisfying the orthogonality relations according to $G$, that is, $\la \psi_x|\psi_y\ra = 0$ for every pair of vertices $x,y,$ that are connected in $G$. Then,
\be \label{GS1}
\frac{\mathcal{D}^\mathcal{S}_C}{  \mathcal{D}^\mathcal{S}_Q } \geqslant \frac{N}{\alpha(G)\beta(G)}  , \ \  \text{ for } \mathcal{S} = 1 .
\ee  
\end{thm}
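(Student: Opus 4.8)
The plan is to read off the result as a clean corollary of the two bounds already proved in the excerpt, evaluated at the extreme value $\mathcal{S}=1$. On the quantum side I would use a protocol of the form \eqref{qspsi} built from the $\beta(G)$-dimensional states $\{\ket{\psi_x}\}_{x=1}^N\in\mathbb{C}^{\beta(G)}$ that, by the very definition of $\beta(G)$, satisfy $\la\psi_x|\psi_y\ra=0$ whenever $x\in G_y$. Substituting these into the success expression \eqref{sqn}, every summand $|\bra{\psi_x}\psi_y\rangle|^2$ with $x\in G_y$ vanishes, so $\mathcal{S}_Q(G)=1$. Since these are $N$ states in $\mathbb{C}^{\beta(G)}$ drawn from the uniform distribution, Observation \ref{fact:d/N} gives $\mathcal{D}_Q\leqslant \beta(G)/N$ for this protocol, hence $\mathcal{D}^{\mathcal{S}}_Q\leqslant \beta(G)/N$ at $\mathcal{S}=1$.

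On the classical side I would simply set $\mathcal{S}_C(G)=1$ in the classical lower bound \eqref{scg}. The term $\big(\sum_x N_x+N\big)\big(\mathcal{S}_C(G)-1\big)$ then vanishes, leaving $\tfrac{1}{N\alpha(G)}\cdot N=\tfrac{1}{\alpha(G)}\leqslant \mathcal{D}_C$ for every classical protocol that achieves perfect success, so $\mathcal{D}^{\mathcal{S}}_C\geqslant 1/\alpha(G)$. Dividing the two estimates yields
\[
\frac{\mathcal{D}^{\mathcal{S}}_C}{\mathcal{D}^{\mathcal{S}}_Q}\;\geqslant\;\frac{1/\alpha(G)}{\beta(G)/N}\;=\;\frac{N}{\alpha(G)\beta(G)},
\]
which is exactly \eqref{GS1}.

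I do not expect any real obstacle here; the content of the theorem is the repackaging of the graph-equality bound and Observation \ref{fact:d/N} around the orthogonal-representation dimension $\beta(G)$. The only points deserving a sentence of care are: (i) $\beta(G)$ is well defined, since $N$ mutually orthogonal vectors trivially realise all edge-orthogonality relations, so $\beta(G)\leqslant N$; and (ii) the orthogonal representation plugged into \eqref{qspsi} is a legitimate quantum protocol attaining $\mathcal{S}=1$, which is immediate from \eqref{sqn}. It would also be worth remarking that the Hadamard-graph statement \eqref{Hd} is the special case $G=H_d$, $N=2^d$, $\beta(H_d)=d$ (via the vectors \eqref{xs}), combined with the Frankl--R\"odl bound $\alpha(H_d)\leqslant 1.99^d$.
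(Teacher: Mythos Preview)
Your proposal is correct and mirrors the paper's own proof essentially verbatim: the paper uses the orthogonal-representation states in the protocol \eqref{qspsi} to get $\mathcal{S}_Q(G)=1$ with $\mathcal{D}_Q\leqslant \beta(G)/N$ via Observation~\ref{fact:d/N}, sets $\mathcal{S}_C(G)=1$ in \eqref{scg} to obtain $\mathcal{D}_C\geqslant 1/\alpha(G)$, and divides. Your additional remarks on the well-definedness of $\beta(G)$ and the Hadamard-graph specialization are accurate and go slightly beyond what the paper spells out in this proof.
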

\begin{proof}
    If there exist states $\{\ket{\psi_x}\}_x\in \mathbbm{C}^{d}$ that fulfill the orthogonality conditions according to $G$, then the quantum strategy given by \eqref{qspsi} achieves $\mathcal{S}(G)=1$, with $\mathcal{D}_Q \leqslant d/N$. Conversely, according to \eqref{scg}, for $\mathcal{S}_C(G)=1$, it follows that $\mathcal{D}_C \geqslant 1/\alpha(G)$. Combining these two observations yields \eqref{GS1}.
\end{proof}
The aforementioned result suggests that a graph offers quantum advantages for achieving the respective tasks perfectly ($\mathcal{S}=1$) if 
\be \label{GS1c}
\frac{N}{\alpha(G)\beta(G)}>1 .
\ee 
Interestingly, a graph exhibits state-independent contextuality if and only if its fractional chromatic number, denoted as $\chi_f(G)$, is greater than $\beta(G)$, that is, $\chi_f(G) > \beta(G)$ \cite{Ramanathan2014}. Additionally, for any graph, $\chi_f(G) \geqslant N/\alpha(G)$. Consequently, any graph satisfies the condition \eqref{GS1c} meets the state-independent contextuality criterion. It is important to note that the converse is not universally true. However, for vertex-transitive graphs, the reverse implication does hold, as these graphs adhere to the equality $\chi_f(G) = N/\alpha(G)$. Notably, the smallest graph meeting the condition \eqref{GS1c} was identified as the Yu-Oh graph, comprising 13 vertices \cite{PhysRevLett.114.250402}.

\section{Pair distinguishability task} \label{V}

We consider a generalized version of the task introduced in \cite{bod}. Alice receives input $x \in [N]$ with $p_x=1/N$ and Bob receives a pair of inputs $y \equiv (x,x')$ randomly such that $x,x' \in \{ 1, \cdots, N \}$ and $x< x'$. The task is to guess $x$. In other words, given Bob's input $(x,x')$, the task is to distinguish between these two inputs. Subsequently, the average success metric, 
\be
\mathcal{S}(N) = \frac{1}{N(N-1)} \sum_{\substack{x,x' \\  x<x' }} \big[ p(x|x,(x,x')) + p(x'|x',(x,x')) \big] .
\ee  

\begin{thm} The following holds for classical communication, 
\be \label{scpd}
(N-1)(\mathcal{S}_C(N) -1) +1 \leqslant  \mathcal{D}_C .
\ee 
\end{thm}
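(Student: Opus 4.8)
The plan is to reuse the template of the two preceding theorems: recast the metric in the general form \eqref{SGen}, specialize the classical formula \eqref{scd}, isolate a per-message quantity $\chi(m)$, bound it combinatorially, and sum over $m$. Concretely, I would set $c(x,(a,b),z) = \frac{1}{N(N-1)}$ when $a<b$ and either $(x,z)=(a,a)$ or $(x,z)=(b,b)$ (the first slot being Alice's input), and $c=0$ otherwise; then $\sum_{x,y,z}c(x,y,z)=1$ and the metric reproduces $\mathcal{S}(N)$. Because for a fixed pair $y=(a,b)$ only the outputs $z=a$ and $z=b$ carry nonzero weight, the inner maximization over $z$ in \eqref{scd} collapses to $\max\{p_e(m|a),p_e(m|b)\}$, giving
\be
\mathcal{S}_C(N) = \max_{\{p_e(m|x)\}} \frac{1}{N(N-1)} \sum_m \chi(m), \quad \chi(m) := \sum_{a<b} \max\{p_e(m|a),p_e(m|b)\} .
\ee

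The combinatorial core is the claim that, for every message $m$,
\be
\chi(m) \leqslant \max_x p_e(m|x) + (N-2) \sum_x p_e(m|x) .
\ee
To see this, fix $m$, write $q_x = p_e(m|x)$, and relabel the inputs so that $q_{(1)} \geqslant q_{(2)} \geqslant \cdots \geqslant q_{(N)}$, breaking ties by any fixed rule. Each unordered pair contributes the $q$-value of its higher-ranked element, so vertex $(k)$ contributes $q_{(k)}$ to exactly the $N-k$ pairs in which it is the larger one; that is, $\chi(m) = \sum_{k=1}^{N}(N-k)\,q_{(k)}$ (equivalently, $\chi(m)$ is the sum of the $q_{(k)}$ weighted by their out-degrees in the transitive tournament induced by the ordering). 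Since the top vertex carries weight $N-1 = (N-2)+1$ while every other weight is at most $N-2$, we obtain $\chi(m) \leqslant q_{(1)} + (N-2)\sum_k q_{(k)}$, which is exactly the displayed inequality with $q_{(1)} = \max_x p_e(m|x)$.

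Finally I would sum this over $m$. Using the normalization $\sum_m p_e(m|x) = 1$ for each $x$ (hence $\sum_m\sum_x p_e(m|x) = N$) and the classical distinguishability formula $\mathcal{D}_C = \frac{1}{N}\sum_m \max_x p_e(m|x)$ from \eqref{DC}, this yields $\sum_m \chi(m) \leqslant N\mathcal{D}_C + N(N-2)$ for any encoding, and therefore $\mathcal{S}_C(N) \leqslant \frac{\mathcal{D}_C + N - 2}{N-1}$. Rearranging gives $(N-1)(\mathcal{S}_C(N)-1) + 1 \leqslant \mathcal{D}_C$, which is \eqref{scpd}. I expect the only nontrivial step to be the bound on $\chi(m)$; everything else is the same bookkeeping as in the RAC and graph-equality proofs, and once the sorted / transitive-tournament structure of $\chi(m)$ is recognized, the coefficients $1$ and $N-2$ fall out directly.
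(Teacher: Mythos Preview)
Your proof is correct and essentially identical to the paper's. The only cosmetic difference is that the paper passes to the $\min$-formulation via \eqref{scdz2} and invokes the elementary fact $\sum_{i<j}\min\{a_i,a_j\}\geqslant \sum_i a_i-\max_i a_i$, whereas you stay with the $\max$-formulation and prove the dual inequality $\sum_{i<j}\max\{a_i,a_j\}\leqslant \max_i a_i+(N-2)\sum_i a_i$ by sorting; the two inequalities are the same statement via $\max\{a,b\}+\min\{a,b\}=a+b$.
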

\begin{proof} 
For this task, the form of success metric \eqref{scdz2} reduces to 
\bea \label{scpdg}
\mathcal{S}_C(N) &=& 1 - \min_{\{p_e(m|x)\}} \frac{1}{N(N-1)} \times \nonumber \\
&&   \sum_m \sum_{\substack{x,x' \\  x<x' }} \min \left\{ p_e(m|x), p_e(m|x') \right\} ,  
\eea 
after substituting the following expression
\be 
c(x,(x,x'),z) = \frac{1}{N(N-1)} \times \begin{cases}   1 , & \text{if $z=x,$ $x\in \{x,x'\}$}\\    $0$,           & \text{otherwise.} \end{cases}
\ee
We use the fact that any set of non-negative numbers $\{a_i\}$ with $i=1,\cdots,N$, satisfies the relation
\be  \label{relation1}
\sum_{i<j} \min\{a_i,a_{j}\}  \geqslant \sum_i a_i - \max\{a_1,\cdots,a_N\} .
\ee 
Replacing $a_x$ by $p_e(m|x)$ into the above relation, we get
\bea   
&& \sum_{x<x'} \min\{p_e(m|x), p_e(m|x') \}  \nonumber \\
& \geqslant & \sum_x p_e(m|x) - \max_x\{p_e(m|x) \}.
\eea  
Next, we take summation over $m$ on both sides and replace the expression of distinguishability \eqref{DC} to obtain
\be \label{pdt1}
\sum_m \sum_{x<x'} \min\{p_e(m|x), p_e(m|x') \} \geqslant N (1- \mathcal{D}_C) .
\ee 
The left-hand side of the above inequality appears on the right-hand side of \eqref{scpdg}. Hence, by substituting the lower bound from \eqref{pdt1} into \eqref{scpdg}, we deduce \eqref{scpd}.
\end{proof}
Let us discuss a quantum strategy for this task that provide advantages. One interesting feature of this task is the fact that $\mathcal{S}_Q$ is fixed by the set of quantum state $\{\rho_x\}$ communicated by Alice. Specifically, due to the Helstrom norm \cite{Helstrom},
\be \label{sqpdt}
\mathcal{S}_Q = \frac12 + \frac{1}{2N(N-1)} \sum_{\substack{x,x' \\  x<x' }}  \| \rho_x - \rho_{x'} \|.
\ee 
%
%Consider the following $N=d+1$ states in $\mathbbm{C}^d$ \cite{PhysRevLett.110.150501}, \be \ket{\psi_x}=\frac{1}{\sqrt{d}}\sum_{k=0}^{d-1}e^{\frac{i2\pi kx}{N}}\ket{k} ,\ee where $x \in [d+1]$. We evaluate $\mathcal{S}$ from \eqref{sqpdt} for these states and further obtain the corresponding $\mathcal{D}_C$ from \eqref{scpd}, which are given as follows:\begin{center}\begin{tabular}{|c c c c|}  \hline $N$ & $\mathcal{S}$ & $\mathcal{D}_C$ & $\mathcal{D}_Q$ \\ [0.5ex]  \hline\hline 3 & 0.933 & $~ ~ \geqslant 0.866$ & $~ ~ \leqslant0.667$\\  \hline\end{tabular}\end{center}
%
Consider the qubit states $\ket{\psi_x}=\cos{\left(\frac{x\pi}{N}\right)}\ket{0} +\sin{\left(\frac{x\pi}{N}\right)}\ket{1}$, where $x \in [N]$. We evaluate $\mathcal{S}$ from \eqref{sqpdt} for these states and further obtain the corresponding $\mathcal{D}_C$ from \eqref{scpd}, which are given as follows:
\begin{center}
\begin{tabular}{|c c c c|} 
 \hline
 $N$ & $\mathcal{S}$ & $\mathcal{D}_C$ & $~ ~ \mathcal{D}_Q$  \\ [0.6ex] 
 \hline\hline
 3 & 0.933 & $~ ~ \geqslant 0.866$ & $~ ~  \frac23 ~$ \\ [0.6ex]
 \hline
 4 & 0.9 & $~ ~ \geqslant 0.7$ & $~ ~ \frac12 ~$ \\ [0.6ex]
 \hline
 5 & 0.8847 & $~ ~ \geqslant 0.5388$  & $~ ~ \frac25 ~$\\ [0.6ex]
 \hline
 6 & 0.8732 & $~ ~ \geqslant 0.366$  & $~ ~ \frac13 ~$\\ [0.6ex]
 \hline
\end{tabular}
\end{center}
%This table gives a few proofs of quantum advantages. 
We produce a more detailed investigation as Fig. \ref{fig:plot2} presents the quantum advantages over classical communication in this particular task for $N=3$ and $N=4$ with qubit states.
\begin{figure}[h!]
    \centering
\includegraphics[width=0.45\textwidth]{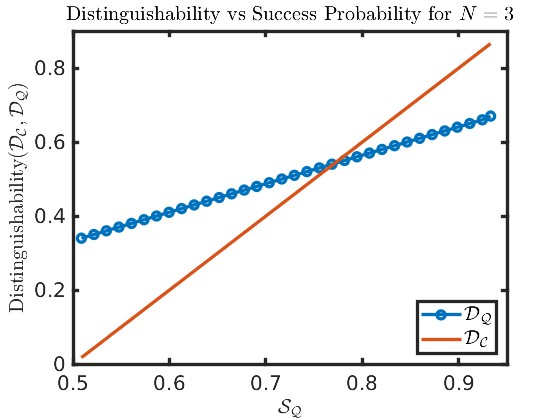}
\includegraphics[width=0.45\textwidth]{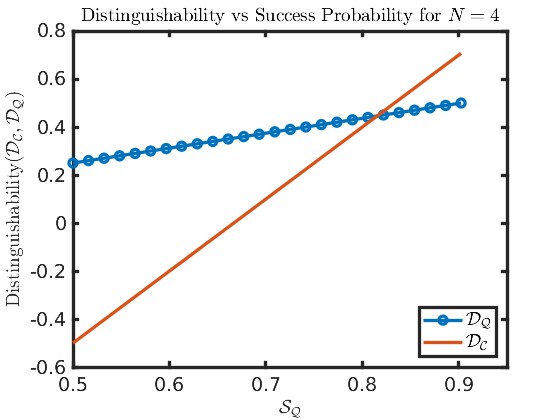}
    \caption{Distinguishability in classical and quantum communication vs average success probability of the pair distinguishability task is presented here for $N=3$ and $N=4$ with qubit states. For different values of $\mathcal{S_Q}$, taken in small intervals, the values of $\mathcal{D_Q}$ are obtained along with the quantum states by implementing the see-saw method of semi-definite optimization presented in Appendix \ref{SDP}. From \eqref{scpd}, we find the lower bound on $\mathcal{D_C}$ to get the same average success probability. We observe that if we increase the dimension of the states, the values of $\mathcal{D_Q}$ remain the same.}
    \label{fig:plot2}
\end{figure}

\section{Comparison with the communication complexity measured by minimum dimension}

In this section, we compare our measure of communication complexity, minimum distinguishability, to the more commonly used measure of communication complexity, minimum dimension. Let us denote by $d_C$ the dimension of the message in classical communication and $d_Q$ for the dimension of quantum message in quantum communication. Let $\overline{\mathcal{S}}_{d_C}$ denote the maximum classical value of the success metric \eqref{SGen} when $d_C$ dimensional classical messages are used in communication. Let us first make the following general observation, upper-bounding $\overline{\mathcal{S}}_{d_C}$ with the maximum value of the success metric $\mathcal{S}_C$ with constrained distinguishabiltiy,

\begin{obs}\label{obs2}
    For any communication task, the maximum value of the success metric with $d_C$ dimensional classical communication, $\overline{\mathcal{S}}_{d_C}$, is upper-bounded by the maximum value of the success metric under distinguishability constraint, $\mathcal{S}_C$, where $\mathcal{D} \leqslant d_C/N$, i.e., for $\mathcal{D} \leqslant d_C/N$,
   \be \label{obs2eq}
        \overline{\mathcal{S}}_{d_C}\leqslant \mathcal{S}_C
   \ee
\end{obs}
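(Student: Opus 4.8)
The plan is to exhibit every $d_C$-dimensional classical protocol as a feasible point of the distinguishability-constrained maximization that defines $\mathcal{S}_C$ at threshold $\mathcal{D}\leqslant d_C/N$. Once this is done, $\overline{\mathcal{S}}_{d_C}$ — being the value of the success metric attained by one particular such protocol (the optimal $d_C$-dimensional one) — cannot exceed the maximum $\mathcal{S}_C$ taken over the entire feasible set, which is exactly \eqref{obs2eq}. So the whole content of the observation is the check that the constraint $\mathcal{D}_C\leqslant d_C/N$ is automatically met by any encoding that uses only $d_C$ message labels.

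First I would recall that in the framework of Sec.~\ref{II} the classical message alphabet size is left unconstrained, so any pair $\{p_e(m|x)\},\{p_d(z|y,m)\}$ with $m\in[d_C]$ is a legitimate candidate in the constrained problem, and its success value coincides with the quantity the $d_C$-dimensional optimization reports. Next I would invoke the closed form \eqref{DC}, namely $\mathcal{D}_C=\tfrac1N\sum_m\max_x\{p_e(m|x)\}$: since the sum ranges over only $d_C$ labels and each conditional probability obeys $p_e(m|x)\leqslant 1$, we get $\mathcal{D}_C\leqslant d_C/N$. Hence the protocol lies in the feasible set $\{\mathcal{D}\leqslant d_C/N\}$, and therefore $\overline{\mathcal{S}}_{d_C}\leqslant \mathcal{S}_C$ by monotonicity of the maximum over a larger feasible set.

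The only (minor) obstacle is bookkeeping around shared randomness, since a priori a $d_C$-dimensional protocol may also employ a shared random variable $\lambda$. I would dispose of this by noting that the effective encoding $p_e(m|x)=\sum_\lambda p(\lambda)\,p_e(m|x,\lambda)$ still has $m\in[d_C]$ and still satisfies $p_e(m|x)\leqslant 1$, so the inequality $\mathcal{D}_C\leqslant d_C/N$ is preserved under the convex mixture; alternatively one may invoke Observation~4 of \cite{bod}, that shared randomness does not help in this setting, and restrict to deterministic encodings from the start. In either case the argument is unchanged, and no tightness statement is required because the claim is merely an inequality.
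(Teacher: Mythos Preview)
Your proposal is correct and follows the same approach as the paper: the paper's proof is essentially the one-line observation that any $d_C$-dimensional classical strategy automatically has $\mathcal{D}\leqslant d_C/N$, so it sits in the feasible set of the distinguishability-constrained problem. You carry out the same argument but with more detail, explicitly invoking \eqref{DC} to verify the bound and carefully handling shared randomness (which the paper does not mention in its proof of this observation); these additions are sound and only strengthen the exposition.
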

\begin{proof}
    The observation follows directly from the fact that any $d_C$ dimensional classical strategy the distinguishability, $\mathcal{D} \leqslant d_C/N$. However, there might exist higher dimensional $(> d_C)$ classical strategies that satisfy this distinguishability constraint. 
\end{proof}
Hence, if we have a $d_Q=d_C$-dimensional quantum strategy with $\mathcal{D}_Q= \frac{d_C}{N}$ which provides an advantage in terms of distinguishability, i.e., $\mathcal{S}_Q>\mathcal{S}_C$, where $\mathcal{S}_C$ is the maximum classical value of the success metric under distinguishability constraint $\mathcal{D} \leqslant d_Q/N$, then \eqref{obs2eq} implies it also provides an advantage in terms of dimension, i.e., $\mathcal{S}_Q>\overline{\mathcal{S}}_{d_C}$. Consequently, to find a strategy for which we have an advantage in terms of distinguishability but not in terms of dimension, we have to consider quantum strategies with $\mathcal{D}_Q<\frac{d_C}{N}$, which leads us to our next observation: 
\begin{obs}
    There exists quantum strategies that offer advantage in a communication complexity task in terms of distinguishability but not in terms of dimension.
\end{obs}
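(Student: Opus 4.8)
The plan is to exhibit a single explicit quantum strategy whose distinguishability lies \emph{strictly} below the dimensional ceiling $\mathcal{D}_Q\leqslant d_Q/N$ of Observation~\ref{fact:d/N}. This is precisely the regime $\mathcal{D}_Q<d_C/N$ that the discussion around Observation~\ref{obs2} singles out as necessary for a distinguishability advantage not to entail a dimensional one. I would work in the pair-distinguishability task of Sec.~\ref{V} with $N=3$, and take three pure qubit states $\ket{\psi_1},\ket{\psi_2},\ket{\psi_3}\in\mathbbm{C}^2$ whose pairwise squared overlaps all equal $|\la\psi_i|\psi_j\ra|^2=5/9$; equivalently their Bloch vectors $\vec n_1,\vec n_2,\vec n_3$ are pairwise at a common angle with $\vec n_i\cdot\vec n_j=1/9>0$, so that the three of them span $\mathbbm{R}^3$ and, in particular, are non-coplanar.

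From here I would extract the two relevant numbers. Since $\|\rho_i-\rho_{i'}\|=2\sqrt{1-5/9}=4/3$ for every pair, the Helstrom identity \eqref{sqpdt} gives $\mathcal{S}_Q(3)=\tfrac12+\tfrac{1}{12}\cdot 3\cdot\tfrac{4}{3}=\tfrac56$. For the distinguishability, Observation~\ref{fact:d/N} gives $\mathcal{D}_Q\leqslant d_Q/N=2/3$, and the crux is that this inequality is \emph{strict}: equality would force each optimal POVM element to be a rank-one operator proportional to the corresponding $\ket{\psi_i}\!\bra{\psi_i}$ (so that $\tr[(\I-\rho_i)M_i]=0$ term by term), hence would require weights $c_i\geqslant 0$ with $\sum_i c_i\,\ket{\psi_i}\!\bra{\psi_i}=\I$, i.e.\ $\sum_i c_i\vec n_i=0$; but the $\vec n_i$ have pairwise positive inner products and thus sit in an acute cone, so no such weights exist. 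Therefore $\mathcal{D}_Q<2/3$.

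It then remains to compare with classical communication along both axes. The classical bound \eqref{scpd} at $N=3$ reads $\mathcal{D}_C\geqslant 2(\mathcal{S}_C-1)+1$, so every classical protocol reaching $\mathcal{S}_C\geqslant 5/6$ must pay $\mathcal{D}_C\geqslant 2/3>\mathcal{D}_Q$: a strict distinguishability advantage. For the dimensional comparison I would exhibit the two-message classical code that sends inputs $1$ and $2$ to message $1$ and input $3$ to message $2$, with Bob declaring ``equal'' iff the received message coincides with the encoding of his own label; a direct count over the six scenarios gives $\mathcal{S}=5/6=\mathcal{S}_Q$. Hence the least classical dimension that matches this quantum strategy is at most $2=d_Q$ (in fact exactly $2$, since dimension one yields only $\mathcal{S}=1/2$ in either theory), so the strategy grants \emph{no} dimensional advantage, which establishes the observation. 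The main obstacle is the strictness claim $\mathcal{D}_Q<d_Q/N$: it rests on identifying exactly when Observation~\ref{fact:d/N} is tight --- namely when the pure states form a weighted tight frame --- and checking that the chosen configuration fails this; everything else (evaluating \eqref{sqpdt}, substituting into \eqref{scpd}, analysing the two-message code) is routine, and one may alternatively bypass the explicit construction altogether by reading the same separation directly off the see-saw SDP curve in Fig.~\ref{fig:plot2}.
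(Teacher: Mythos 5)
Your proposal is correct, but it proves the observation by a genuinely different route than the paper. The paper's proof works inside the $(n=2,d=2)$ RAC: Alice sends four noisy qubit states $\rho_{x_1x_2}=p\ket{\psi_{x_1x_2}}\!\bra{\psi_{x_1x_2}}+(1-p)\I/2$ with $p=4/7$, yielding the concrete numbers $\mathcal{S}_Q=0.7$ and $\mathcal{D}_Q=0.3859$ (obtained numerically), which beat the classical distinguishability bound \eqref{scnd22} at $\mathcal{S}=0.7$ while falling below $\overline{\mathcal{S}}_{d_C=2}=3/4$. You instead work in the pair-distinguishability task with $N=3$ and, rather than computing $\mathcal{D}_Q$ explicitly, prove only the \emph{strict} inequality $\mathcal{D}_Q<d_Q/N$ by a tight-frame obstruction: equality in Observation \ref{fact:d/N} would force $M_x=c_x\ket{\psi_x}\!\bra{\psi_x}$ with $\sum_x c_x\vec n_x=0$, impossible when the Bloch vectors have pairwise positive inner products. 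That argument is sound (the POVM set is compact, so the supremum is attained and hence strictly below $2/3$), and it is fully analytic where the paper leans on a see-saw SDP value; the trade-off is that your separation is unquantified — the gap below $2/3$ could in principle be tiny — whereas the paper exhibits explicit numbers. Your choice of a non-coplanar configuration with $\mathcal{S}_Q=5/6$ (rather than the trine states of Sec.~\ref{V}, which \emph{do} give a dimensional advantage at $N=3$) is exactly what makes the dimensional comparison come out as a tie, and your two-message classical code does achieve $5/6$; note only that your description of Bob's decoding ("declaring equal iff the message coincides with the encoding of his own label") is phrased in the language of the equality problem rather than the pair task — the intended rule is that for a pair $(x,x')$ with distinct messages Bob outputs the matching input, and for identically encoded pairs he outputs a fixed one, which indeed gives $5/6$. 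Both approaches establish the observation; yours generalizes more readily (any pure-state ensemble failing the weighted-tight-frame condition sits strictly below the $d/N$ ceiling), while the paper's ties the example directly to the quantitative RAC analysis of Fig.~\ref{fig:rac}.
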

\begin{proof}
For the RACs considered in Fig. \ref{fig:rac}, the best success probabilities using two-dimensional classical messages are 3/4, 3/4 and 2/3, respectively, for the three plots. On the other hand, we find quantum advantages (Fig. \ref{fig:rac}) even when the success probabilities $\mathcal{S}$ are less than these values. Therefore, our study of RACs reveals instances where quantum advantage in dimension is absent, but the quantum advantage in distinguishability exists. For example, for $n=2, d=2$, Alice encodes her inputs into four states such that,
     \be
     \rho_{x_1x_2} = p\ket{\psi_{x_1x_2}}\bra{\psi_{x_1x_2}}+(1-p)\frac{\I}{2},
     \ee
     where, $\ket{\psi_{00}}=\cos\frac{\pi}{8}\ket{0}+\sin\frac{\pi}{8}\ket{1}, \ket{\psi_{01}}=\cos\frac{3\pi}{8}\ket{0}+\sin\frac{3\pi}{8}\ket{1}, \ket{\psi_{10}}=\cos\frac{7\pi}{8}\ket{0}+\sin\frac{7\pi}{8}\ket{1}$ and $\ket{\psi_{11}}=\cos\frac{5\pi}{8}\ket{0}+\sin\frac{5\pi}{8}\ket{1}.$ To redeem the first bit, Bob  measures in $\{\ket{0},\ket{1}\}$ basis and for the second bit, he measures in $\{\ket{+},\ket{-}\}$ basis. In this strategy with $p=4/7$, $\mathcal{S}_Q=0.7$ such that $\mathcal{D}_Q=0.3859$. On the other hand, from \eqref{scnd22}, we have $\mathcal{S}_C\leqslant 0.693$ with $\mathcal{D}_C=0.3859$. But, we know $\overline{\mathcal{S}}_{d_C=2}=0.75$. So, there is advantage in quantum communication in terms of minimum distinguishability but not in terms of minimum dimension.  
\end{proof}

% For Equality Problem defined by Graphs, we want to investigate \eqref{th4}.

The pair distinguishability tasks in terms of dimension have been previously discussed \cite{PhysRevLett.110.150501}. In this case, for $d=2$,
\be \label{brunner1}
\overline{\mathcal{S}}_{d_C}=\frac12-\frac{1}{N(N-1)}\left(\left\lfloor\frac{N}{2}\right\rfloor\left(N-\left(\left\lfloor\frac{N}{2}\right\rfloor+1\right)\right)\right),
\ee
where $\lfloor x \rfloor$ denotes the integer part of $x$, and 
${S}_Q=\frac{N}{4(N-1)}$.
From these two relations, it is easy to see that only odd $N$ gives the quantum advantage at $d=2$. for our quantum strategy, we found advantages for $N=3,4,5,6$ in terms of distinguishability. For $N=4,6$, the quantum strategy under the distinguishability constraint yields quantum advantage, but no advantage is obtained under the dimension constraint. 

Let us now consider the equality problem defined by regular polygons with odd number of vertices, the maximum success metric for $d$-levelled classical communication turns out to be $\overline{\mathcal{S}}_{d_C}=1-\frac{2}{3N}$ \cite{Saha_2019}. Due to \eqref{th4}, we can see for a regular polygon with odd $N\geqslant5$, $\frac{\mathcal{S}_Q}{\overline{\mathcal{S}}_{d_C}}=\frac{1-\frac{2}{3}\sin^2\big(\frac{\pi}{2N}\big)}{1-\frac{2}{3N}}\geqslant 1$ with dimension $2$. Thus, the quantum advantage manifests in both dimension and distinguishability.
One can be curious about finding a quantum strategy that gives an advantage with respect to a dimension but not with respect to distinguishability. For example, let us consider the equality problem for the regular pentagon. In this case, $\overline{\mathcal{S}}_{d_C=2}=13/15$ and $\mathcal{S}_C \leqslant 14/15$ with $\mathcal{D}_C=2/5$. As one can observe a gap between these two success metrics, there must be a valid quantum strategy such that the success metric lies between the values of $\frac{13}{15}$ and $\frac{14}{15}$. But for that matter, the bound of $\mathcal{S}_C$ must be proved to be tight. For the time being, we leave this problem for future exploration. \\

Interestingly, there are instances where quantum advantages measured by both dimension and distinguishability follow similar trends as the complexity of the problems increases. For RAC with $n=2$, classical success metric follows the relation $\overline{\mathcal{S}}_{d_C}\leqslant \frac12(1+\frac{d_C}{d^2})$ \cite{PhysRevA.107.062210}. To achieve the success metric of the quantum strategy at \eqref{th2}, $d_C$ must obey the constraint $d_C \geqslant d\sqrt{d}$. So, we can deduce the advantage parameter $\frac{d_C}{d_Q}\geqslant \sqrt{d}$, when $\mathcal{S}=\frac12(1+\frac{1}{\sqrt{d}})$, which is like our previous version of the game defined by distinguishability.
Similarly, for the equality problem defined by the Hadamard graph, it is shown that $\frac{d_C}{d_Q}\geqslant \frac{(1.005)^d}{d}$ to achieve $\mathcal{S}=1$ \cite{RevModPhys.82.665,saha2023}.  

% To compare the advantage with the distinguishability scenario, we produce the following table by using \eqref{brunner1} and \eqref{brunner2}.
% \begin{center}
% \begin{tabular}{|c|c|c|} 
%  \hline
%  $N$ & $\mathcal{S'_C}$ & $\mathcal{S'_Q}$  \\ [0.6ex] 
%  \hline\hline
%  3 & $~ ~\frac{1}{3}$ & $ ~ ~\frac{3}{8}$  \\[0.6ex]
%  \hline
%  4 & $~ ~\frac{1}{3}$ & $~ ~\frac{1}{3}$  \\[0.6ex]
%  \hline
%  5 & $~ ~\frac{3}{10}$ & $~ ~\frac{5}{16}$  \\[0.6ex]
%  \hline
%  6 & $~ ~\frac{3}{10}$ & $~ ~\frac{3}{10}$  \\[0.6ex]
%  \hline
% \end{tabular}
% \end{center}

\section{Conclusions}\label{VI}

Communication complexity plays a crucial role in information science, and quantum theory offers a notable advantage over classical methods. Traditionally, one-way communication complexity is quantified by the minimal dimension of systems that the sender uses to achieve a given task. However, in this investigation, we take a fresh approach by evaluating communication complexity based on the distinguishability of the sender's input, without imposing constraints on the dimension of the communicated systems. This perspective gains significance when preserving the confidentiality of the sender's input is paramount. Moreover, the dimension independence nature of this measure implies that quantum advantage signifies something unattainable in classical communication and does not rely on specific details of the physical system.

We concentrate on two significant categories of communication complexity tasks: the general version of random access codes and equality problems defined by graphs. Lower bounds on the distinguishability of the sender's input are derived as a function of the success metric for these tasks in classical communication. Remarkably, we demonstrate exponential and polynomial increases in the ratio between classical and quantum distinguishability, showcasing the unbounded quantum advantage in preserving the sender's input and paving the way for new quantum supremacy in distributed computation.  In Sec. VI, we give a quantitative analysis of the superiority of the communication complexity measured by distinguishability against dimension. There can be an interesting venture to find the class of communication tasks and strategies where the minimum dimension scenario has an edge over the minimum distinguishability scenario.

Other future works could explore advantageous quantum protocols for random access codes with higher $n$ and for equality problems based on different graphs. From our study, it can be observed that the range of the success metric values where quantum advantage occurs depends on the specifics of the communication task. Exploring this range for various tasks would be an intriguing direction for future work. Furthermore, to get tight bounds on classical communication for all values of the success metric additional distiguishability-like constraints such as anti-distinguishability can be considered \cite{chaturvedi2021}.   Additionally, proposing privacy-preserving computational schemes based on these results and exploring other communication complexity tasks with practical applications are avenues for further research. 

%In this paper, we investigated classical random access codes for $n$ dits and every bit with $d$ dimensions.We calculated the success metric as a function of $n$, $d$ and the distinguishability of input states. We showed for $(n=2,d)$, $\frac{\mathcal{D}_C}{\mathcal{D}_Q}$ increases with respect to $\sqrt{d}$ and thus, it can be as large as possible. So, in this case, there is quantum advantage.In the graph equality problems, we derive the classical bound on the Success metric ($\mathcal{S}_C (G)$) as a function of $\mathcal{D}_C$. We give a proof of the fact that there is no quantum advantage for any graph having vertices upto 4. There is a quantum protocol for this problem to get quantum advantages for $n$-cyclic graph ($n$ = odd). Nevertheless, this investigation leaves many open questions related to both Random Access Codes and Graph Equality Problem. 

\subsection*{Acknowledgements}
AC acknowledges financial support by NCN Grant SONATINA 6 (Contract No. UMO-2022/44/C/ST2/00081). DS acknowledges support from STARS (STARS/STARS-2/2023-0809), Govt.
of India.

\bibliography{ref}

\appendix

\section{Semi-definite programming methods}\label{SDP}
Our aim is to retrieve the maximum quantum value of a generic success metric \eqref{SGen} of a one-way communication task \eqref{SGenQ}, 
\be 
\mathcal{S}_Q = \sum_{x,y,z} c(x,y,z) \tr (\rho_x M_{z|y}) ,
\ee

given an upper bound on the distinguishability of the sender's states, say $\mathcal{D}_Q\leq p$, which translates to the following optimization problem,
\be \label{SvsDQ}
\begin{split}
 \mathcal{S}^{max}_{Q}=\underset{\{\rho_x\}^N_{x=1},\{M_{z|y}\}}{\ \ \max\ \ } & \sum_{x,y,z} c(x,y,z) \tr (\rho_x M_{z|y})\\
\text{s.t.\ \ } &  \mathcal{D}_Q \leqslant p\\
&  \rho_x \geqslant 0, \Tr{\rho_x} = 1, \ \ \forall x \in [N], \\
&  M_{z|y} \geqslant 0, \forall z \in [D], y\in[M], \\
&  \sum_{z}M_{z|y} = \I, \forall y\in[M],
\end{split}
\ee
Observe that the optimal solution of \eqref{SvsDQ}, $\mathcal{S}^{max}_{Q}$ given $\mathcal{D}_Q \leqslant p$, also yields an upper-bound on the minimal distinguishability, namely $\mathcal{D}_Q\geq p$, required to achieve the success metric $\mathcal{S}_{Q}=\mathcal{S}^{max}_{Q}$.
As the computing the distinguishability $\mathcal{D}_Q$ forms a separate SDP, the optimization problem \eqref{SvsDQ} does not yield to standard SDP solutions primarily because of the constraint $\mathcal{D}_Q \leqslant p$. However, Ref. \cite{Tavakoli2022informationally} presented an ingenious technique to enable SDP methods based on the observation of SDP dual of the distinguishability optimization problem \eqref{DistinguishabilityQuantum}, which we describe now. 

Let us define an auxiliary operator $\Theta$ with the property,
\be \label{AuxVar}
   \Theta \geq \rho_x, \ \ \forall x \in [N]. 
\ee
This variable allows us to upper bound the distinguishability \eqref{DistinguishabilityQuantum} in the following way,
\bea \label{AuxVarCon} \nonumber
&\mathcal{D}_Q & = \max_{\{M_x\}} \frac1N \sum_x \tr(\rho_x M_{x}) \\ 
& & \leqslant \max_{\{M_x\}} \frac1N  \tr(\Theta\sum_xM_x)=\frac1N\tr(\Theta),
\eea
where we have used \eqref{AuxVar} for the inequality, and $\sum_x M_{x}= \I$ for the second. Equation \eqref{AuxVarCon} allows us to impose the constraint $\mathcal{D}_Q \leqslant p$ in the optimization problem \eqref{SvsDQ} as a tracial constraint, $\frac1N\tr(\Theta)\leqslant p$, such that the optimization problem \eqref{SvsDQ} now becomes,
\be \label{SvsDQSeeSaw}
\begin{split}
 \mathcal{S}^{max}_{Q_{LB}}=\underset{\{\rho_x\}^N_{x=1},\Theta,\{M_{z|y}\}}{\ \ \max\ \ } & \sum_{x,y,z} c(x,y,z) \tr (\rho_x M_{z|y})\\
\text{s.t.\ \ } &  \rho_x \geqslant 0, \Tr{\rho_x} = 1, \ \ \forall x \in [N], \\
&  \Theta \geq \rho_x, \ \ \forall x \in [N], \\
&  \frac1N\tr(\Theta)\leqslant p,\\
&  M_{z|y} \geqslant 0, \forall z \in [D], y\in[M], \\
&  \sum_{z}M_{z|y} = \I, \forall y\in[M].
\end{split}
\ee
Now, for fixed measurements $\{M_{z|y}\}$ the optimization problem \eqref{SvsDQSeeSaw} becomes a straightforward SDP for the states $\{\rho_{x}\}$ and the auxiliary operator $\Theta$. Similarly, for fixed $\{\rho_{x}\}$ and $\Theta$ the optimization problem is a SDP for the measurements. Thus, for a given Hilbert space dimension, we can alternate these two SDPs, keeping the optimal solution of one as fixed parameters for the other, \'{a} la see-saw, to retrieve dimension dependent lower bounds on the maximum success metric, $S^{max}_{Q_{LB}}\leqslant S^{max}_{Q}$. We also retrieve an upper bound on the minimal distinguishability, $\mathcal{D}_Q\geqslant p$, required to achieve the success metric $\mathcal{S}_{Q}=\mathcal{S}^{max}_{Q_{LB}}$.

However, the optimization for absolute quantum maximal value of the success metric with restricted distinguishability is substantially more arduous to solve, primarily because the dimension of the quantum systems could in principle be arbitrarily large. To facilitate series of tightening dimension independent upper bounds on the absolute maximal value of the success metric, given $\mathcal{D}_Q\leqslant p$, we, now, present a hierarchy of SDP relaxations of the optimization problem \eqref{SvsDQSeeSaw}.

In particular, such an hierarchy was initially formulated in \cite{Tavakoli2022informationally}. Here we present a modified formulation which is less complex and easier-to-implement. The central concept underpinning our SDP relaxations is that of \emph{hinged moment matrices} \cite{Chaturvedi2021characterising}. For any positive semi-definite operator $\tau\geqslant 0$, and any sequence of linear operators $\mathcal{O}=\{O_i\}^N_{i=1}$, the moment matrix $\Gamma^{\mathcal{O}}_\tau$ is defined as $(\Gamma^{\mathcal{O}}_{\tau})_{i,j}\equiv(\Gamma^{\mathcal{O}}_{\tau})_{O_i,O_j}=\Tr(\tau O^\dagger_i O_j)$. Because $\tau$ is positive semi-definite, a moment matrix \emph{hinged} on $\tau$ must be positive semi-definite as well, $\Gamma^{\mathcal{O}}_{\tau} \geqslant 0$.
Now following the prescription in \cite{Chaturvedi2021characterising}, for any given list of operators $\mathcal{O}$, we consider $N$ moment matrices $\{\tau^{\mathcal{O}}_{\rho_x}\}^N_{x=1}$ each hinged on the corresponding $N$ density operators $\{\rho_x\}^N_{x=1}$ representing the sender's $N$ preparations, such that $\tau^{\mathcal{O}}_{\rho_x}\geqslant 0$ for all $x\in[N]$. We consider an additional moment matrix $\tau^{\mathcal{O}}_{\Theta}$ hinged on the auxiliary variable $\Theta$, such that the definition \eqref{AuxVar} implies the following constraint,
\be
    \Gamma^{\mathcal{O}}_\Theta \geq \Gamma^{\mathcal{O}}_{\rho_x}, \ \ \forall x \in [N].
\ee
Notice, upto this point, we have not invoked \emph{any} specifics of operator list $\mathcal{O}$ except of it being composed of linear operators. Consequently, the aforementioned constraints are independent of the operator list. Let us now consider an operator list, $\mathcal{O}_1\equiv\{\I,\{\{M_{z|y}\}^{D-1}_{z=1}\}^M_{y=1}\}$. Immediately, the first entry of each moment matrix must be unity $(\Gamma_x)_{\I,\I}=1$, and some entries directly correspond to probabilities, $p(z|x,y)=\tr(\rho_x M^y_z)=(\Gamma^{\mathcal{O}_1}_{\rho_x})_{\I,M_{z|y}}$ (for all $z\in[D-1]$ and $p(z=D|x,y)=1-\sum^{D-1}_{z=1}(\Gamma^{\mathcal{O}_1}_{\rho_x})_{\I,M_{z|y}}$ for all $y\in [M], x\in[N]$). 

As the measurements and the dimension of the quantum systems remain unconstrained in communication complexity tasks, we can use the Naimark’s dilation theorem, and without loss of generality, take the measurements to be \emph{sharp} or protective, such that we have additional constraints $(\Gamma^{\mathcal{O}_1}_{\tau})_{M_{z'|y},M_{z|y}}=M_{z|y}^\dagger M_{z'|y}=\delta_{z,z'}M_{z|y}=(\Gamma^{\mathcal{O}_1}_{\tau})_{\I,M_{z|y}}$ (for all $z,z'\in[D-1],y\in[M],\tau\in\{\{\rho_x\}^N_{x=1},\Theta\}$). Consequently, we arrive at the following optimization problem,

\begin{widetext}
\begin{equation} \label{SvsDQ1}
\begin{split} \mathcal{S}^{max}_{Q_{1}}=
 \underset{\{\Gamma^{\mathcal{O}_1}_{\rho_x}\}^N_{x=1},\Gamma^{\mathcal{O}_1}_\Theta}{\ \ \max\ \ } & \sum_{x,y,z} c(x,y,z)p(z|x,y) \\
\text{s.t.\ \ } & \Gamma^{\mathcal{O}_1}_{\Theta} \geqslant \Gamma^{\mathcal{O}_1}_{\rho_x}, \ \ \forall \ x \in [N], \\
&  \frac1N \Gamma^{\mathcal{O}_1}_{\Theta} \leqslant p,\\
&  p(z|x,y) = (\Gamma^{\mathcal{O}_1}_{\rho_x})_{\I,M_{z|y}}, \ \ \forall \ z\in[D-1],y\in[M],x\in[N], \\
&  p(z=D|x,y) = 1-\sum^{D-1}_{z=1}(\Gamma^{\mathcal{O}_1}_{\rho_x})_{\I,M_{z|y}}, \ \ \forall \  y\in[M],x\in[N], \\
&  \Gamma^{\mathcal{O}_1}_{\rho_x}\geqslant 0, \ \ \forall \ x\in[N], \\
&  (\Gamma^{\mathcal{O}_1}_{\rho_x})_{\mathbb{I},\mathbb{I}}=1, \ \ \forall \ x\in[N], \\
&  (\Gamma^{\mathcal{O}_1}_{\tau})_{M_{z'|y},M_{z|y}}=(\Gamma^{\mathcal{O}_1}_{\tau})_{\I,M_{z|y}}, \ \  \forall \ y\in[M], z,z'\in[D-1],\tau\in\{\{\rho_x\}^N_{x=1},\Theta\}.
\end{split}
\end{equation}
\end{widetext}
As all constraints in \eqref{SvsDQ1} are satisfied by quantum protocols, the optimization problem \eqref{SvsDQ1} forms a dimension-independent relaxation of \eqref{SvsDQSeeSaw}. Consequently, we retrieve an upper bound on the absolute maximum quantum value of the success metric, $\mathcal{S}^{max}_{Q_{1}}\geqslant \mathcal{S}^{max}_{Q}\geqslant \mathcal{S}^{max}_{Q_{LB}}$. Now, $Q_1$ with the operator list $\mathcal{O}_1$ is the first level relaxation in a hierarchy of SDP relaxations of the optimization problem \eqref{SvsDQ}. We now define $Q_{\mathcal{L}}$, where $\mathcal{L}\in \mathbb{N}$, as the $\mathcal{L}$th level of the relaxation, associated with the operator list $\mathcal{O}_{\mathcal{L}}$ containing all monomials of operators contained in $\mathcal{O}_1$ of length at-most $\mathcal{L}$. As $\mathcal{O}_1\subseteq \mathcal{O}_{\mathcal{L}}$, for $\mathcal{L}\geqslant 1$, the hierarchy retrieves a series of tightening upper-bounds $S^{max}_{Q_{\mathcal{L}}}$ on the maximum success metric given $\mathcal{D}_Q \leq p$, such that, $\mathcal{S}^{max}_{Q_{1}}\geqslant \mathcal{S}^{max}_{Q_{\mathcal{L}}}\geqslant \mathcal{S}^{max}_{Q_{\mathcal{L}+1}}\geqslant \mathcal{S}^{max}_{Q}\geqslant \mathcal{S}^{max}_{Q_{LB}}$. Whenever an upper bound from the hierarchy coincides (up to numerical precision) with the lower bound from the see-saw method, we retrieve the maximum quantum value of the success metric, $\mathcal{S}^{max}_{Q_{\mathcal{L}}}= \mathcal{S}^{max}_{Q}= \mathcal{S}^{max}_{Q_{LB}}$, given $\mathcal{D}_Q \leq p$, for any $\mathcal{L}\geqslant 1$. This hierarchy also  retrieves a series of tightening lower bounds $\mathcal{D}_{Q_{\mathcal{L}}}$ on the minimal distinguishability required to achieve the success metric $\mathcal{S}_{Q} = \mathcal{S}^{max}_{Q_{\mathcal{L}}}$, such that $\mathcal{D}_{Q_{1}}\leqslant \mathcal{D}_{Q_{\mathcal{L}}}\leqslant \mathcal{D}_{Q_{\mathcal{L}+1}}\leqslant \mathcal{D}_{Q}\leqslant \mathcal{D}_{Q_{UB}}$. Whenever,  $\mathcal{D}_{Q_{UB}}=\mathcal{D}_{Q_{\mathcal{L}}}$ (up to machine precision), we retrieve the absolute minimal quantum distinguishability required to achieve a given value of the success metric $\mathcal{S}_{Q} = \mathcal{S}^{max}_{Q_{\mathcal{L}}}$. We use these methods to obtain Figure \ref{fig:rac}, which displays a thorough investigation of quantum advantage for RACs with parameters $(n=2, d=3)$, $(n=3, d=2)$.
 
Alternatively, following a similar methodology, we can directly formulate a see-saw SDP method for obtaining dimension dependent lower-bounds, as a well as hierarchy of SDP relaxations to retrieve dimension independent lower bounds on the distinguishability of quantum communication given a lower bound on the success metric of a communication. Clearly, the optimal solutions of this optimization problem will coincide with optimal solutions of the optimization problem \eqref{SvsDQ}.

\end{document}